\documentclass{IEEEtran}

\IEEEoverridecommandlockouts

\usepackage{cite}
\usepackage{amsmath,amssymb,amsfonts,amsthm,newtxmath,mleftright}
\usepackage{graphicx,xcolor}
\usepackage[caption=false,font=footnotesize]{subfig}
\usepackage{fancyhdr}



\renewcommand{\vec}[1]{\ensuremath{\mathbf{#1}}} 
\newcommand{\mat}[1]{\ensuremath{\mathbf{#1}}} 
\newcommand{\T}{\ensuremath{\mathsf{T}}} 

\newcommand{\expval}[1]{\operatorname{E}[#1]}
\DeclareMathOperator{\atantwo}{atan2}

\newcommand{\pd}{\ensuremath{p_\mathit{d}}}
\newcommand{\pfa}{\ensuremath{p_\mathit{fa}}}

\newtheorem{proposition}{Proposition}

\theoremstyle{remark}
\newtheorem*{remark}{Remark}

\begin{document}
	
\title{Structured Covariance Matrix Estimation for Noise-Type Radars}

\author{
	\IEEEauthorblockN{David Luong,~\IEEEmembership{Graduate Student Member, IEEE}, Bhashyam Balaji,~\IEEEmembership{Senior Member, IEEE}, and \\ Sreeraman Rajan,~\IEEEmembership{Senior Member, IEEE}}
	
	\thanks{D.\ Luong is with Carleton University, Ottawa, ON, Canada K1S 5B6. Email: david.luong3@carleton.ca.}
	\thanks{B.\ Balaji is with Defence Research and Development Canada, Ottawa, ON, Canada K2K 2Y7. Email: bhashyam.balaji@drdc-rddc.gc.ca.}
	\thanks{S.\ Rajan is with Carleton University, Ottawa, ON, Canada K1S 5B6. Email: sreeraman.rajan@carleton.ca.}
}

\maketitle

\begin{abstract}
	Standard noise radars, as well as noise-type radars such as quantum two-mode squeezing radar, are characterized by a covariance matrix with a very specific structure. This matrix has four independent parameters: the amplitude of the received signal, the amplitude of the internal signal used for matched filtering, the correlation between the two signals, and the relative phase between them. In this paper, we derive estimators for these four parameters using two techniques. The first is based on minimizing the Frobenius norm between the structured covariance matrix and the sample covariance matrix; the second is maximum likelihood parameter estimation. The two techniques yield the same estimators. We then give probability density functions (PDFs) for all four estimators. Because some of these PDFs are quite complicated, we also provide approximate PDFs. Finally, we apply our results to the problem of target detection and derive expressions for the receiver operating characteristic curves of two different noise radar detectors.
\end{abstract}

\begin{IEEEkeywords}
	Quantum radar, quantum two-mode squeezing radar, noise radar, covariance matrix, parameter estimation
\end{IEEEkeywords}

\section{Introduction}

The very name \emph{noise radar} suggests the nature of its transmit signal: noise \cite{cooper1967random,narayanan1998design,lukin1998millimeter,dawood2001roc,narayanan2002UWB,tarchi2010SAR,kulpa2013signal,wasserzier2019noise,savci2020noise}. This sets it apart from other types of radars, such as frequency-modulated continuous-wave (FMCW) radars, whose transmit signals are deterministic. There is no denying that FMCW radars are more popular than noise radars. However, from a practical perspective, the randomness of their transmit signals endows them with desirable properties: low probability of intercept, immunity against noise and jamming, and a ``thumbtack'' ambiguity function \cite{thayaparan2006noise,narayanan2016noise}. For these reasons, there has always been a latent undercurrent of research aimed at building noise radars \cite{narayanan2004design,stove2016design,savci2019trials}. But there is a second reason why noise radars are a worthwhile subject of research. There exists at least one other type of radar whose transmit signal is also nondeterministic: \emph{quantum two-mode squeezing} (QTMS) radar, a type of quantum radar \cite{chang2018quantum,luong2019roc}. It turns out that noise radars are closely allied with QTMS radars \cite{luong2019cov}, which links them to quantum radars more generally. This motivates us to examine the theory of noise radar more carefully.

Until recently, quantum radars were confined to the realm of theory \cite{lloyd2008qi,tan2008quantum,barzanjeh2015mqi,wilde2017gaussian} except for a handful of quantum lidar experiments \cite{lopaeva2013qi,zhuang2017entanglementlidars,england2018quantum}. However, in 2018, a team led by Wilson at the Institute for Quantum Computing (University of Waterloo) demonstrated the viability of a \emph{quantum-enhanced noise radar} at microwave frequencies \cite{chang2018quantum}. This experiment was later analyzed using more conventional radar engineering metrics, and \cite{luong2019roc} was the first scientific publication in the world to publish \emph{receiver operating characteristic} (ROC) \emph{curves} for a quantum radar experiment. This experiment, whose leading results were later confirmed by a similar experiment at the Institute of Science and Technology Austria \cite{barzanjeh2019experimental}, showed that microwave quantum radars can be built in the lab. 

Although we introduced the term \emph{QTMS radar} in \cite{luong2019roc} to emphasize the vastly different technology underlying the new quantum radar design, the term \emph{quantum-enhanced noise radar} highlights the theoretical similarities between QTMS radars and standard noise radars. Where detection performance is concerned, we can speak of them collectively as ``noise-type radars''. The main theoretical result that ties noise-type radars together is that they are characterized by a covariance matrix with a very specific structure \cite{luong2019cov}. The matrix depends on four parameters: the amplitude (or power) of the received signal, the amplitude of the internal signal used for matched filtering, the correlation coefficient between the two signals, and the relative phase between the signals. 

In previous work, we highlighted the importance of the correlation coefficient for target detection, and investigated a method for estimating the correlation coefficient \cite{luong2019rice}. This method was based on minimizing the Frobenius norm between the structured covariance matrix and the sample covariance matrix, the latter being calculated directly from the measurement data. The minimization was performed numerically, which is not practical in many radar systems. In this paper, we show that this minimization can be done analytically, which greatly increases the applicability of our results to real-world systems. We exhibit the exact, closed form estimate not only for the correlation coefficient, but for all four parameters in the noise radar covariance matrix. We also show that, by a curious coincidence, the same estimates are obtained via maximum likelihood parameter estimation.

The remainder of this paper is organized as follows. In Sec.\ \ref{sec:background}, we introduce the covariance matrix that characterizes noise-type radars. In Sec.\ \ref{sec:estimating}, we give estimators for the four parameters in the covariance matrix. (The relevant proofs, however, have been relegated to the Appendixes.) In Sec.\ \ref{sec:pdfs}, we characterize the probability distributions of the estimators. Since some of these distributions are complicated, we also give approximations. In Sec.\ \ref{sec:target_detection}, we use these results to analyze the detection performance of noise-type radars. Sec.\ \ref{sec:conclusion} concludes the paper.

\section{The Covariance Matrix for Noise-Type Radars}
\label{sec:background}

In \cite{luong2019cov}, we showed that, under certain conditions, noise-type radars are completely described by a $4 \times 4$ covariance matrix which we will now describe.

It is well known that an electromagnetic signal can be described by a pair of real-valued time series, namely the \emph{in-phase} and \emph{quadrature} voltages of the signal. A noise-type radar, in the simplest case, has two signals associated with it (for a total of four time series): the signal received by the radar and a signal retained within the radar as a reference for matched filtering. We will denote by $I_1[n]$ and $Q_1[n]$ the in-phase and quadrature voltages, respectively, of the received signal. Similarly, let $I_2[n]$ and $Q_2[n]$ denote the in-phase and quadrature voltages of the reference signal. We assume that these voltages are digitized, so these are discrete time series indexed by $n$.

Note that the \emph{transmitted} signal is not explicitly modeled here. All knowledge of the transmitted signal is encoded in the reference signal. The latter may be thought of as a ``copy'' of the transmitted signal, though it is important to note that this copy is necessarily imperfect. The uncertainty principle of quantum mechanics, as applied to in-phase and quadrature voltages, guarantees the existence of a certain amount of error between the transmitted and reference signals \cite{luong2020magazine}. This minimum error manifests itself as noise, which may be termed \emph{quantum noise}.

We now make the assumption that justifies the name ``noise radar'': we assume that the transmitted and reference signals are stationary Gaussian white noise processes with zero mean. We also make the assumption that any other source of noise, such as system noise or atmospheric noise, may be modeled as additive white Gaussian noise. (Note that quantum noise is known to be Gaussian.) Consequently, the received signal is also a stationary Gaussian white noise process. In short, the four time series $I_1[n]$, $Q_1[n]$, $I_2[n]$, and $Q_2[n]$ are real-valued, zero-mean, stationary Gaussian white noise processes; this allows us to simplify the notation by dropping the index $n$. Finally, we assume that these four processes are pairwise independent unless the time lag between the voltages is zero.

Under the above conditions, the received and reference signals of a QTMS radar are fully specified by the $4 \times 4$ covariance matrix $\expval{\vec{x}\vec{x}^\T}$, where $\vec{x} = [I_1, Q_1, I_2, Q_2]^\T$. In \cite{luong2019cov}, we proved that this matrix has a very specific structure. In block matrix format, we may write it as
\begin{equation} \label{eq:QTMS_cov}
	\mat{\Sigma}(\sigma_1, \sigma_2, \rho, \phi) =
	\begin{bmatrix}
		\sigma_1^2 \mat{1}_2 & \rho \sigma_1 \sigma_2 \mat{R}'(\phi) \\
		\rho \sigma_1 \sigma_2 \mat{R}'(\phi)^\T & \sigma_2^2 \mat{1}_2
	\end{bmatrix}
\end{equation}
where $\sigma_1^2$ and $\sigma_2^2$ are the received and reference signal powers, respectively, while $\rho$ is a correlation coefficient, $\phi$ is the phase shift between the signals, $\mat{1}_2$ is the $2 \times 2$ identity matrix, and $\mat{R}'(\phi)$ is the reflection matrix 
\begin{equation}
	\mat{R}'(\phi) = 
	\begin{bmatrix}
		\cos \phi & \sin \phi \\
		\sin \phi & -\cos \phi
	\end{bmatrix} \! .
\end{equation}
Standard noise radars are described by a matrix of the same overall form, but with the rotation matrix
\begin{equation}
	\mat{R}(\phi) = 
	\begin{bmatrix}
		\cos \phi & \sin \phi \\
		-\sin \phi & \cos \phi
	\end{bmatrix}
\end{equation}
taking the place of the reflection matrix. The results in this paper hold for both standard noise radars and QTMS radars after appropriate choices of sign as detailed below. We assume $\sigma_1 \geq 0$, $\sigma_2 \geq 0$, and $\rho \geq 0$ because their signs can always be accounted for by an appropriate choice of $\phi$.

The contribution of this paper is the derivation of estimators for $\sigma_1$, $\sigma_2$, $\rho$, and $\phi$, as well as the presentation of results related to these estimators.

\section{Estimating the Parameters of the Covariance Matrix}
\label{sec:estimating}

We will estimate the four parameters in \eqref{eq:QTMS_cov} via two methods. The first is a ``naive'' method which we might term the \emph{minimum Frobenius norm} (MFN) method. The second is maximum likelihood (ML) estimation.

Both methods start with the sample covariance matrix
\begin{align} \label{eq:sample_cov}
	\hat{\mat{S}} = \frac{1}{N} \sum_{n=1}^N \vec{x}[n] \vec{x}[n]^\T\!,
\end{align}
calculated from $N$ instances of the random vector $\vec{x}$---that is, $N$ samples each from the in-phase and quadrature voltages of the received and reference signals. In radar terminology, we say that we integrate over $N$ samples of the radar's measurement data. Note that, as a consequence of the assumptions outlined in Sec.\ \ref{sec:background}, each sample is independent and identically distributed.

In the following, we will use an overline to denote the sample mean over $N$ samples. For example, $\hat{\mat{S}} = \overline{\vec{x}\vec{x}^\T}$.

\subsection{Minimum Frobenius Norm Estimation}
\label{subsec:MFN_est}

The MFN method consists of minimizing the Frobenius norm between the structured covariance matrix \eqref{eq:QTMS_cov} and the sample covariance matrix \eqref{eq:sample_cov}. More concretely, we perform the minimization
\begin{equation} \label{eq:minimization}
	\min_{\sigma_1, \sigma_2, \rho, \phi} \mleft\| \mat{\Sigma}(\sigma_1, \sigma_2, \rho, \phi) - \hat{\mat{S}} \mright\|_F
\end{equation}
subject to the constraints $0 \leq \sigma_1$, $0 \leq \sigma_2$, and $0 \leq \rho \leq 1$. (The subscript $F$ denotes the Frobenius norm.) The MFN estimators $\hat{\sigma}_1$, $\hat{\sigma}_2$, $\hat{\rho}$, and $\hat{\phi}$ are the arguments which minimize \eqref{eq:minimization}.

In \cite{luong2019rice}, we obtained estimates of $\rho$ by performing the minimization \eqref{eq:minimization} numerically. This procedure is computationally expensive and would be impractical in many radar setups. The results in this paper allow us to do away with numerical optimization altogether.

\subsection{Maximum Likelihood Estimation}

The probability density function for a 4-dimensional multivariate normal distribution with zero mean and covariance matrix $\mat{\Sigma}$ is
\begin{equation}
	f(\vec{x}|\mat{\Sigma}) = \frac{\exp \mleft( -\frac{1}{2} \vec{x}^\T \mat{\Sigma}^{-1} \vec{x} \mright)}{\sqrt{(2 \pi)^4 |\mat{\Sigma}|}}
\end{equation}
where $|\mat{\Sigma}|$ is the determinant of $\mat{\Sigma}$. When considered as a function of $\mat{\Sigma}$ instead of $\vec{x}$, this becomes the likelihood function. The ML estimators arise from maximizing the likelihood function, or equivalently, the log-likelihood function. For $N$ independently drawn samples $\vec{x}[1], \dots, \vec{x}[N]$, the log-likelihood is
\begin{equation} \label{eq:log_l}
	\ell(\mat{\Sigma}) = -\frac{N}{2} \mleft( \ln |\mat{\Sigma}| + 4 \ln(2 \pi) - \overline{\vec{x}^\T \mat{\Sigma}^{-1} \vec{x}} \mright).
\end{equation}

\subsection{Parameter Estimates}

One of the main results of this paper, and perhaps the most surprising of them, is that the MFN and ML methods lead to the same estimators. We will relegate the actual derivations of the estimators to the Appendixes. Here we present only the final result, namely the estimators themselves as obtained from both methods.

In order to express the estimators in a compact form, we introduce the following auxiliary quantities:
\begin{subequations}
\begin{align}
	\label{eq:aux_P1}
	P_1 &= I_1^2 + Q_1^2 \\
	\label{eq:aux_P2}
	P_2 &= I_2^2 + Q_2^2 \\
	\label{eq:aux_Rc}
	R_c &= I_1 I_2 \mp Q_1 Q_2 \\
	\label{eq:aux_Rs}
	R_s &= I_1 Q_2 \pm I_2 Q_1.
\end{align}
\end{subequations}
For $R_c$ and $R_s$, the upper signs apply when the reflection matrix $\mat{R}'(\phi)$ is used in \eqref{eq:QTMS_cov} (QTMS radar); the lower signs apply when the rotation matrix $\mat{R}(\phi)$ is used (standard noise radar). Note that $\bar{P}_1$, $\bar{P}_2$, $\bar{R}_c$, and $\bar{R}_s$ are merely sums of the appropriate entries in the sample covariance matrix $\hat{\mat{S}}$.

\begin{proposition}
	In terms of the auxiliary quantities \eqref{eq:aux_P1}--\eqref{eq:aux_Rs}, the MFN and ML estimators for the four parameters in \eqref{eq:QTMS_cov} are
	\begin{subequations}
		\begin{align}
			\label{eq:est_sigma1}
			\hat{\sigma}_1 &= \sqrt{ \frac{\bar{P}_1}{2} } \\
			\label{eq:est_sigma2}
			\hat{\sigma}_2 &= \sqrt{ \frac{\bar{P}_2}{2} } \\
			\label{eq:est_rho}
			\hat{\rho} &= \sqrt{ \frac{\bar{R}_c^2 + \bar{R}_s^2}{\bar{P}_1 \bar{P}_2} } \\
			\label{eq:est_phi}
			\hat{\phi} &= \atantwo(\bar{R}_s, \bar{R}_c)
		\end{align}
	\end{subequations}
	where $\atantwo(y, x)$ is the two-argument arctangent.
\end{proposition}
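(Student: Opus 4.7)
The plan is to exploit the block structure of $\mat{\Sigma}$ and, in both methods, to optimize over $\phi$ first, so that what remains in $(\sigma_1,\sigma_2,\rho)$ either separates (MFN) or admits a direct closed-form check (ML).

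For the MFN method, I would expand $\|\mat{\Sigma}(\sigma_1,\sigma_2,\rho,\phi) - \hat{\mat{S}}\|_F^2$ entry by entry, counting the off-diagonal blocks twice by symmetry. Each diagonal block contributes $2\sigma_i^4 - 2\sigma_i^2 \bar{P}_i$ plus sample-dependent constants. In the off-diagonal blocks, the two entries carrying $\pm\cos\phi$ combine with $\overline{I_1 I_2}$ and $\overline{Q_1 Q_2}$ to produce $\bar{R}_c$, while the two entries carrying $\sin\phi$ combine to produce $\bar{R}_s$; this is exactly where the sign conventions in \eqref{eq:aux_Rc}--\eqref{eq:aux_Rs} arise, and the same algebra handles the standard-noise-radar case with $\mat{R}(\phi)$ in place of $\mat{R}'(\phi)$. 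Regrouping yields
\[
\|\mat{\Sigma} - \hat{\mat{S}}\|_F^2 = 2\sigma_1^4 - 2\sigma_1^2 \bar{P}_1 + 2\sigma_2^4 - 2\sigma_2^2 \bar{P}_2 + 4r^2 - 4r(\bar{R}_c \cos\phi + \bar{R}_s \sin\phi) + C,
\]
where $r := \rho \sigma_1 \sigma_2$ and $C$ is independent of the parameters. For any $r \ge 0$ this is minimized by $\hat{\phi} = \atantwo(\bar{R}_s, \bar{R}_c)$, after which the objective is separable and quadratic in $\sigma_1^2$, $\sigma_2^2$, and $r$. Elementary calculus then gives $\hat{\sigma}_i^2 = \bar{P}_i/2$ and $\hat{r} = \tfrac{1}{2}\sqrt{\bar{R}_c^2 + \bar{R}_s^2}$, so $\hat{\rho} = \hat{r}/(\hat{\sigma}_1 \hat{\sigma}_2) = \sqrt{(\bar{R}_c^2 + \bar{R}_s^2)/(\bar{P}_1 \bar{P}_2)}$. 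The constraint $\hat{\rho} \le 1$ is automatic: writing $\bar{R}_c + i\bar{R}_s$ as $\overline{z_1 z_2}$ (resp.\ $\overline{z_1 \bar{z}_2}$) with $z_k = I_k + i Q_k$, Cauchy--Schwarz gives $\bar{R}_c^2 + \bar{R}_s^2 \le \bar{P}_1 \bar{P}_2$.

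For the ML method I would first use block inversion together with the orthogonality $\mat{R}'(\phi)\mat{R}'(\phi)^\T = \mat{1}_2$ to obtain $|\mat{\Sigma}| = \sigma_1^4 \sigma_2^4 (1-\rho^2)^2$ and a closed-form $\mat{\Sigma}^{-1}$. Substituting into \eqref{eq:log_l} and writing the quadratic form in the auxiliary quantities yields
\[
\overline{\vec{x}^\T \mat{\Sigma}^{-1} \vec{x}} = \frac{1}{1-\rho^2} \left[ \frac{\bar{P}_1}{\sigma_1^2} + \frac{\bar{P}_2}{\sigma_2^2} - \frac{2\rho(\bar{R}_c \cos\phi + \bar{R}_s \sin\phi)}{\sigma_1 \sigma_2} \right].
\]
The coefficient of $\bar{R}_c \cos\phi + \bar{R}_s \sin\phi$ in $\ell$ has constant sign for $\rho \in (0,1)$, so optimizing over $\phi$ is again the same $\atantwo$ problem and yields the same $\hat{\phi}$. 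The main obstacle is the remaining three-variable problem in $(\sigma_1, \sigma_2, \rho)$, whose first-order conditions are coupled and nonlinear. Rather than solve them from scratch, I would substitute the MFN candidate and use the identity $\hat{\rho}\sqrt{\bar{R}_c^2 + \bar{R}_s^2} = 2\hat{\sigma}_1 \hat{\sigma}_2 \hat{\rho}^2$ to collapse the cross-terms, after which each stationarity equation becomes an identity. A brief boundary check on the open parameter set (behavior as $\sigma_i \to 0$, $\sigma_i \to \infty$, $\rho \to 1^-$) then confirms that this critical point is the global maximum, so the ML estimator coincides with the MFN estimator as claimed.
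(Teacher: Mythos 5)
Your MFN argument is correct and in one respect cleaner than the paper's. The paper's Appendix \ref{app:mfn} works with the stationary points of $g_1$ in the original variables and then must check the boundary cases $\sigma_1=0$, $\sigma_2=0$, $\rho=0$, and $\rho=1$ separately, the last of which it only dispatches with a computer algebra system. Your expansion agrees with the paper's (the term $2\cdot 2\rho^2\sigma_1^2\sigma_2^2$ is exactly your $4r^2$), and the reparametrization $r=\rho\sigma_1\sigma_2$ makes the $\phi$-profiled objective separable in $\sigma_1^2$, $\sigma_2^2$, and $r$, each piece having an obvious unique minimizer on $[0,\infty)$. Your Cauchy--Schwarz observation $\bar{R}_c^2+\bar{R}_s^2\le\bar{P}_1\bar{P}_2$ then shows the unconstrained separable minimizer already satisfies the feasibility constraint $0\le r\le\sigma_1\sigma_2$, which disposes of all boundary cases (including $\rho=1$) at once. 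That is a genuine improvement over the paper's route.

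The ML half, however, has a gap. Your objective, determinant, and quadratic form all match the paper's $g_2$, and the $\phi$-step is fine; but verifying that the MFN candidate annihilates $\nabla g_2$ only shows it is \emph{a} stationary point. To conclude it is the global maximizer you need either that it is the \emph{unique} interior stationary point---which is what the paper's Appendix \ref{app:ml} establishes by solving the first-order conditions through a forced chain of substitutions rather than by substitute-and-verify---or some global structure such as concavity in suitable coordinates; you supply neither. In addition, your boundary check is incomplete: $g_2\to-\infty$ as $\sigma_i\to 0,\infty$ and as $\rho\to 1^-$, but the face $\rho=0$ of the feasible set $0\le\rho\le 1$ is a genuine boundary on which $g_2$ stays finite, and it must be compared against the interior candidate. (The paper devotes a separate paragraph to this case. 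The comparison does go your way: with the optimal $\hat{\sigma}_1,\hat{\sigma}_2$ the interior stationary value exceeds the $\rho=0$ value by $-2\ln(1-\hat{\rho}^2)\ge 0$, but the argument has to be made explicitly.)
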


\begin{proof}
	See Appendix \ref{app:mfn} for a proof that these are the MFN estimators, and Appendix \ref{app:ml} for a proof that these same estimators are also the ML estimators.
\end{proof}

\section{Probability Distributions for the Parameter Estimates}
\label{sec:pdfs}

In this section, we give expressions for the probability density functions (PDFs) of the estimators \eqref{eq:est_sigma1}--\eqref{eq:est_phi}. Of these, the most important is perhaps the one for $\hat{\rho}$ because of its importance for target detection, a connection which we will explore in Sec.\ \ref{sec:target_detection}. However, for completeness, we give PDFs for all four estimators. 

For $\hat{\rho}$ and $\hat{\phi}$, the exact PDFs are quite complicated, so we will give simple approximations to these distributions. In order to quantify the goodness of these approximations, we will make use of a metric on probability distributions known as the \emph{total variation distance} (TVD). Informally speaking, the TVD between two probability distributions is defined as the maximum possible difference between the probabilities assigned to the same event by the two distributions. It always lies in the interval $[0,1]$. According to Lemma 2.1 of \cite{tsybakov2009introduction}, when the distributions are described by PDFs, the TVD is
\begin{equation} \label{eq:totVarDist}
	\mathit{TVD} = \frac{1}{2} \int \mleft| f(x) - g(x) \mright| \, dx
\end{equation}
where $f(x)$ and $g(x)$ are the PDFs of the two distributions, and the integral is taken over the whole domain of the PDFs. Apart from furnishing us with a concrete formula for the TVD, this expression gives us a simpler interpretation of the TVD: it is half the integrated absolute error between the PDFs.

\subsection{PDFs for $\hat{\sigma}_1$ and $\hat{\sigma}_2$}

The distributions of the estimated signal amplitudes $\hat{\sigma}_1$ and $\hat{\sigma}_2$ are nothing more than rescaled versions of the chi distribution, as shown in the following proposition.

\begin{proposition}
	The PDF of $\hat{\sigma}_1$ for $x \geq 0$ is
	\begin{equation} \label{eq:PDF_sigma1}
		f_{\hat{\sigma}_1}(x | \sigma_1, N) = \frac{2N^N}{\Gamma(N) \sigma_1^{2N}} x^{2N-1} \exp \mleft( -\frac{N x^2}{\sigma_1^2} \mright)
	\end{equation}
	where $\Gamma(N)$ denotes the gamma function. This also holds for $\hat{\sigma}_2$ when $\sigma_1$ is replaced with $\sigma_2$.
\end{proposition}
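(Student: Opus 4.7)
The plan is to show that $\hat{\sigma}_1$ is a simple rescaling of a chi-distributed random variable with $2N$ degrees of freedom, then transform to obtain the stated PDF.

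First I would read off the marginal distributions of $I_1[n]$ and $Q_1[n]$ from the structured covariance matrix \eqref{eq:QTMS_cov}: its top-left $2\times 2$ block is $\sigma_1^2 \mat{1}_2$, so under the modeling assumptions of Sec.\ \ref{sec:background} the in-phase and quadrature samples $I_1[n]$ and $Q_1[n]$ are independent $\mathcal{N}(0,\sigma_1^2)$ random variables, independent across $n$. Consequently $P_1[n]/\sigma_1^2 = (I_1[n]^2+Q_1[n]^2)/\sigma_1^2$ is a chi-squared random variable with two degrees of freedom, and summing the $N$ independent copies gives
\begin{equation*}
    \frac{N \bar{P}_1}{\sigma_1^2} \;=\; \sum_{n=1}^{N} \frac{P_1[n]}{\sigma_1^2} \;\sim\; \chi^2_{2N}.
\end{equation*}

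Next I would take square roots: the random variable $Y := \sqrt{N\bar{P}_1}/\sigma_1$ has the chi distribution with $2N$ degrees of freedom, whose PDF is $f_Y(y) = \bigl[2^{N-1}\Gamma(N)\bigr]^{-1} y^{2N-1} e^{-y^2/2}$ for $y\ge 0$. Since $\hat{\sigma}_1 = \sqrt{\bar{P}_1/2}$, the relation between $\hat{\sigma}_1$ and $Y$ is the linear rescaling $Y = \hat{\sigma}_1 \sqrt{2N}/\sigma_1$, with Jacobian $\sqrt{2N}/\sigma_1$. Applying the standard change-of-variables formula and simplifying the powers of $2$ and $N$ yields exactly \eqref{eq:PDF_sigma1}.

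The argument for $\hat{\sigma}_2$ is identical after swapping indices, using the bottom-right block $\sigma_2^2 \mat{1}_2$ of \eqref{eq:QTMS_cov}. No step here is a real obstacle; the only thing to be careful about is the Jacobian bookkeeping in the final change of variables, which must absorb a factor of $(2N)^{N}/2^{N-1} = 2N^N$ in order to match the stated constant.
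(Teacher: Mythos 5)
Your proposal is correct and follows essentially the same route as the paper: both identify the rescaled variable $\hat{\sigma}_1\sqrt{2N}/\sigma_1$ as chi-distributed with $2N$ degrees of freedom (you via the intermediate $\chi^2_{2N}$ statement, the paper directly) and then apply the standard change of variables. Your constant bookkeeping $(2N)^N/2^{N-1} = 2N^N$ checks out against \eqref{eq:PDF_sigma1}.
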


\begin{proof}
	Note that $\bar{P}_1$ consist of a sum of squares of $2N$ independent and identically distributed normal random variables, namely $N$ instances each of $I_1$ and $Q_1$. Both $I_1$ and $Q_1$ have zero mean and standard deviation $\sigma_1$, as can be seen from \eqref{eq:QTMS_cov}. Thus, the rescaled random variable 
	\begin{equation}
		\sqrt{\frac{2N}{\sigma_1^2}} \hat{\sigma}_1 = \sqrt{\sum_{n=1}^N \mleft( \frac{i_1[n]}{\sigma_1} \mright)^{\!2} + \mleft( \frac{q_1[n]}{\sigma_1} \mright)^{\!2}},
	\end{equation}
	being the positive square root of the sum of squares of $2N$ standard normal variates, follows a chi distribution with $2N$ degrees of freedom. The proposition follows upon applying the standard change of variable formula to the PDF of the chi distribution.
\end{proof}

\begin{remark}
	The PDF \eqref{eq:PDF_sigma1} may be recognized as a Nakagami $m$-distribution \cite{nakagami1960m} with parameters $m = N$ and $\Omega = \sigma_1^2$.
\end{remark}

\begin{figure}[t]
	\centerline{\includegraphics[width=\columnwidth]{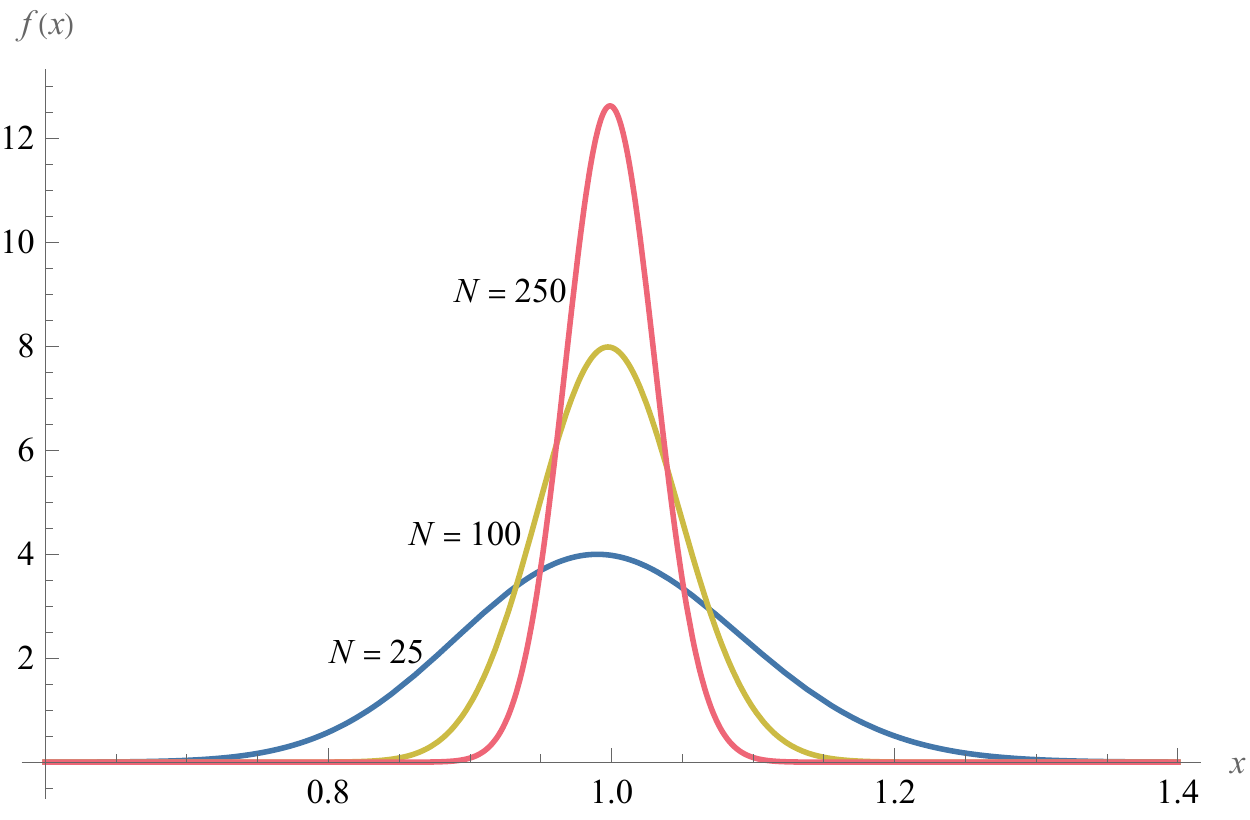}}
	\caption{Probability density function of $\hat{\sigma}_1$ when $\sigma_1 = 1$, $N \in \{25, 100, 250\}$.}
	\label{fig:PDF_sigma1}
\end{figure}

Plots of $f_{\hat{\sigma}_1}(x | \sigma_1, N)$ are shown in Fig.\ \ref{fig:PDF_sigma1}.

\subsection{Exact and Approximate PDFs for $\hat{\rho}$}

The derivation of the PDF for the estimated correlation coefficient $\hat{\rho}$ is extremely involved. But luckily, our task has been done for us. We exploit an intriguing connection between noise radar and the theory of two-channel synthetic aperture radar (SAR), in which matrices analogous to \eqref{eq:QTMS_cov} appear. (Note, however, that the matrices in two-channel SAR are $2 \times 2$ complex-valued matrices instead of $4 \times 4$ real-valued matrices.) In two-channel SAR, the quantity analogous to $\rho$ is known as the \emph{coherence}. An estimator for the coherence, essentially identical to \eqref{eq:est_rho}, was investigated in \cite{touzi1996statistics,touzi1999coherence,gierull2001unbiased,sikaneta2004detection}. We now quote one of their results here.

\begin{proposition}
	When $N > 2$ and $\rho \neq 1$, the PDF of $\hat{\rho}$ for $0 \leq x \leq 1$ is
	\vspace{-\jot}
	\begin{multline} \label{eq:PDF_rho}
		f_{\hat{\rho}}(x | \rho, N) = 2 (N-1)(1-\rho^2)^N \\
			\times x(1-x^2)^{N-2} {}_2F_1(N, N; 1; \rho^2 x^2)
	\end{multline}
	where ${}_2F_1$ is the Gaussian hypergeometric function.
\end{proposition}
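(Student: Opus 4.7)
The plan is to reduce the problem to the classical sample-coherence distribution from two-channel SAR by repackaging the four real time series into two complex ones. Define $z_1 = I_1 + i Q_1$ and $z_2 = I_2 + i Q_2$, and let $w_2 = z_2^*$ in the QTMS case or $w_2 = z_2$ in the standard noise-radar case. The first step is to verify directly from \eqref{eq:QTMS_cov} that $(z_1, w_2)$ is a jointly circularly symmetric bivariate complex Gaussian vector: the pseudo-moments $E[z_1^2]$, $E[w_2^2]$, and $E[z_1 w_2]$ all vanish, while $E[|z_1|^2] = 2\sigma_1^2$, $E[|w_2|^2] = 2\sigma_2^2$, and $|E[z_1 w_2^*]| = 2 \rho \sigma_1 \sigma_2$. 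Hence the magnitude of the complex correlation coefficient of $(z_1, w_2)$ is exactly $\rho$.

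The second step is the algebraic identification of the estimator \eqref{eq:est_rho} as the sample coherence magnitude of this complex pair. With the choices above, $\bar{P}_1 = \overline{|z_1|^2}$, $\bar{P}_2 = \overline{|w_2|^2}$, and $|\overline{z_1 w_2^*}|^2 = \bar{R}_c^2 + \bar{R}_s^2$, so
\[ \hat{\rho}^2 = \frac{\bigl| \overline{z_1 w_2^*} \bigr|^2}{\overline{|z_1|^2}\,\overline{|w_2|^2}}, \]
which is exactly the two-channel SAR coherence estimator built from $N$ independent looks on a bivariate circular complex Gaussian source with complex-coherence magnitude $\rho$.

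Having made this identification, I would invoke the derivation that already exists in the SAR literature \cite{touzi1996statistics,touzi1999coherence,gierull2001unbiased,sikaneta2004detection}. Its core steps are: (i) the unnormalized $2 \times 2$ sample covariance of $(z_1, w_2)$ follows a complex Wishart distribution with $N$ degrees of freedom; (ii) the joint density of its three algebraically independent real entries can be written in closed form; and (iii) marginalizing out the two diagonal entries produces a power series in $\rho^2 x^2$ whose coefficients are $(N)_k^2/(k!)^2$, which collapses into ${}_2F_1(N, N; 1; \rho^2 x^2)$ by the series definition of the Gaussian hypergeometric function. The resulting density is exactly \eqref{eq:PDF_rho}.

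The main obstacle is not the Wishart calculation itself but the bookkeeping of conventions. One must confirm that the SAR ``number of looks'' corresponds to our $N$ rather than to $2N$ (a natural concern, since each complex sample bundles two real voltages), and that the normalization of the complex coherence matches the one used in the cited papers. The hypotheses $N > 2$ and $\rho \neq 1$ are precisely the regularity conditions needed for the Wishart density and its marginal to remain integrable; in the degenerate limit $\rho = 1$ the distribution collapses to a point mass at $x = 1$ and must be handled separately.
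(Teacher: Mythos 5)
Your proposal takes essentially the same route as the paper, whose entire proof is the citation to Sec.\ VI of the Touzi et al.\ reference; the only difference is that you explicitly supply the complex repackaging $z_1 = I_1 + iQ_1$, $w_2 = z_2^*$ (conjugated for the QTMS/reflection-matrix case) that the paper leaves implicit, and that reduction is correct: the pseudo-moments vanish under \eqref{eq:QTMS_cov}, $|E[z_1 w_2^*]| = 2\rho\sigma_1\sigma_2$ gives coherence magnitude exactly $\rho$, $|\overline{z_1 w_2^*}|^2 = \bar{R}_c^2 + \bar{R}_s^2$ recovers \eqref{eq:est_rho}, and one complex sample pair corresponds to one SAR look, so the number of looks is indeed $N$ and not $2N$.
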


\begin{proof}
	See Sec.\ VI of \cite{touzi1996statistics}.
\end{proof}

This expression is both numerically and analytically unwieldy (except when $\rho = 0$). However, we are able to supply an empirical PDF which approximates \eqref{eq:PDF_rho} well when $N$ is larger than approximately 100. In \cite{luong2019rice}, we showed that the correlation coefficients estimated using the MFN method (albeit with a numerical minimization instead of an analytic one) approximately follow a Rice distribution. Recall that the PDF of the Rice distribution is
\begin{equation} \label{eq:PDF_rice}
	f_\text{Rice}(x | \alpha, \beta) = \frac{x}{\beta^2} \exp \mleft( -\frac{x^2 + \alpha^2} {2\beta^2} \mright) I_0 \mleft( \frac{x\alpha}{\beta^2} \mright)
\end{equation}
where $\alpha$ and $\beta$ are the parameters of the distribution, and $I_0$ is the modified Bessel function of the first kind of order zero (not to be confused with the in-phase voltages $I_1$ or $I_2$). The approximation derived in \cite{luong2019rice} may be summarized as follows.

\begin{proposition} \label{prop:approx_rice}
	When $N \gtrapprox 100$, $\hat{\rho}$ approximately follows a Rice distribution with parameters
	\begin{subequations}
		\begin{align}
			\label{eq:rho_rice_alpha}
			\alpha &= \rho \\
			\label{eq:rho_rice_beta}
			\beta &= \frac{1-\rho^2}{\sqrt{2N}}.
		\end{align}
	\end{subequations}
\end{proposition}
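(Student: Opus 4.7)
The plan is to characterize $\hat{\rho}$ in the large-$N$ regime by applying a two-dimensional central limit theorem to the numerator $\sqrt{\bar{R}_c^2 + \bar{R}_s^2}$ and a law-of-large-numbers concentration argument to the denominator $\sqrt{\bar{P}_1 \bar{P}_2}$. Since $\bar{P}_1$ and $\bar{P}_2$ are each sample means of $N$ i.i.d.\ terms with finite variance, they converge to $2\sigma_1^2$ and $2\sigma_2^2$ with Gaussian fluctuations of order $1/\sqrt{N}$; to leading order I would substitute the constant $2\sigma_1\sigma_2$ for $\sqrt{\bar{P}_1 \bar{P}_2}$. For the numerator I would invoke the bivariate CLT on $(\bar{R}_c, \bar{R}_s)$, whose mean and covariance follow directly from the structured covariance \eqref{eq:QTMS_cov} via Isserlis' theorem on the zero-mean Gaussian vector $(I_1, Q_1, I_2, Q_2)$.

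The routine moment calculations yield $\expval{R_c} = 2\rho\sigma_1\sigma_2\cos\phi$ and $\expval{R_s} = 2\rho\sigma_1\sigma_2\sin\phi$, so that the rescaled pair $(\bar{U}, \bar{V}) := (\bar{R}_c, \bar{R}_s)/(2\sigma_1\sigma_2)$ has mean $(\rho\cos\phi, \rho\sin\phi)$. The modulus of this mean is $\rho$, which already pins $\alpha = \rho$ in the Rice PDF \eqref{eq:PDF_rice}. To identify $\beta$, I would treat $(\bar{U}, \bar{V})$ as jointly Gaussian with an \emph{approximately isotropic} covariance of common variance $\beta^2$, so that $\sqrt{\bar{U}^2 + \bar{V}^2}$ is by definition Rice-distributed. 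The correct scale $\beta^2 = (1-\rho^2)^2/(2N)$ is then fixed by matching the leading-order variance of $\hat{\rho}$ itself, which can be obtained either by the delta method applied to $\hat{\rho} = g(\bar{R}_c, \bar{R}_s, \bar{P}_1, \bar{P}_2)$---accounting for the nontrivial covariances between the $R$- and $P$-moments---or by expanding the exact PDF \eqref{eq:PDF_rho} for large $N$. The motivation for choosing a Rice rather than a plain Gaussian approximation is that the Rice form also captures the Rayleigh limit when $\rho = 0$, which is precisely the regime where the isotropic CLT picture is exact.

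The main obstacle is that this isotropy assumption is not strictly true: a direct computation shows that the $2 \times 2$ covariance matrix of $(R_c, R_s)$ has eigenvalues $2\sigma_1^2\sigma_2^2(1 \pm \rho^2)$, so for $\rho \neq 0$ the Gaussian limit of $(\bar{U}, \bar{V})$ is genuinely anisotropic, and the variance along the mean direction differs from the perpendicular variance. Consequently the proposition is inherently an approximation rather than a strict CLT consequence, which is why the hypothesis $N \gtrapprox 100$ is empirical in nature. To complete the argument I would bound the TVD \eqref{eq:totVarDist} between the exact density \eqref{eq:PDF_rho} and the Rice density with parameters \eqref{eq:rho_rice_alpha}--\eqref{eq:rho_rice_beta}, either analytically via the large-$N$ asymptotics of ${}_2F_1(N, N; 1; \rho^2 x^2)$ or numerically over a grid of $(\rho, N)$ values; the latter is presumably the route taken in \cite{luong2019rice} to certify the $N \gtrapprox 100$ threshold.
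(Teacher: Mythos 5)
Your proposal is correct and takes essentially the same route as the paper, which in fact gives no formal proof: the proposition is imported as an empirical result from \cite{luong2019rice}, supported by exactly the two ingredients you describe --- the CLT-plus-concentration heuristic (stated in the remark following the proposition, where it yields $\alpha=\rho$ and $\beta=1/\sqrt{2N}$ only to first order in $\rho$) and numerical total variation distance computations against the exact PDF \eqref{eq:PDF_rho}. Where you go slightly beyond the paper is in proposing to recover the $(1-\rho^2)$ factor in $\beta$ analytically via a full delta-method expansion of $\hat{\rho}$ as a function of $(\bar{R}_c,\bar{R}_s,\bar{P}_1,\bar{P}_2)$; the paper states that this factor has ``no basis other than simulations,'' and your insistence that the fluctuations of $\bar{P}_1,\bar{P}_2$ and their correlations with $\bar{R}_c,\bar{R}_s$ must be retained is well placed, since the numerator-only argument gives a radial variance proportional to $1+\rho^2$ rather than $(1-\rho^2)^2$. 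Your identification of the anisotropy of the limiting Gaussian --- eigenvalues proportional to $1\pm\rho^2$ --- as the obstruction to a strict CLT proof likewise matches the paper's own caveat that the result is inherently an approximation to be certified numerically.
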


\begin{figure}[t]
	\centerline{\includegraphics[width=\columnwidth]{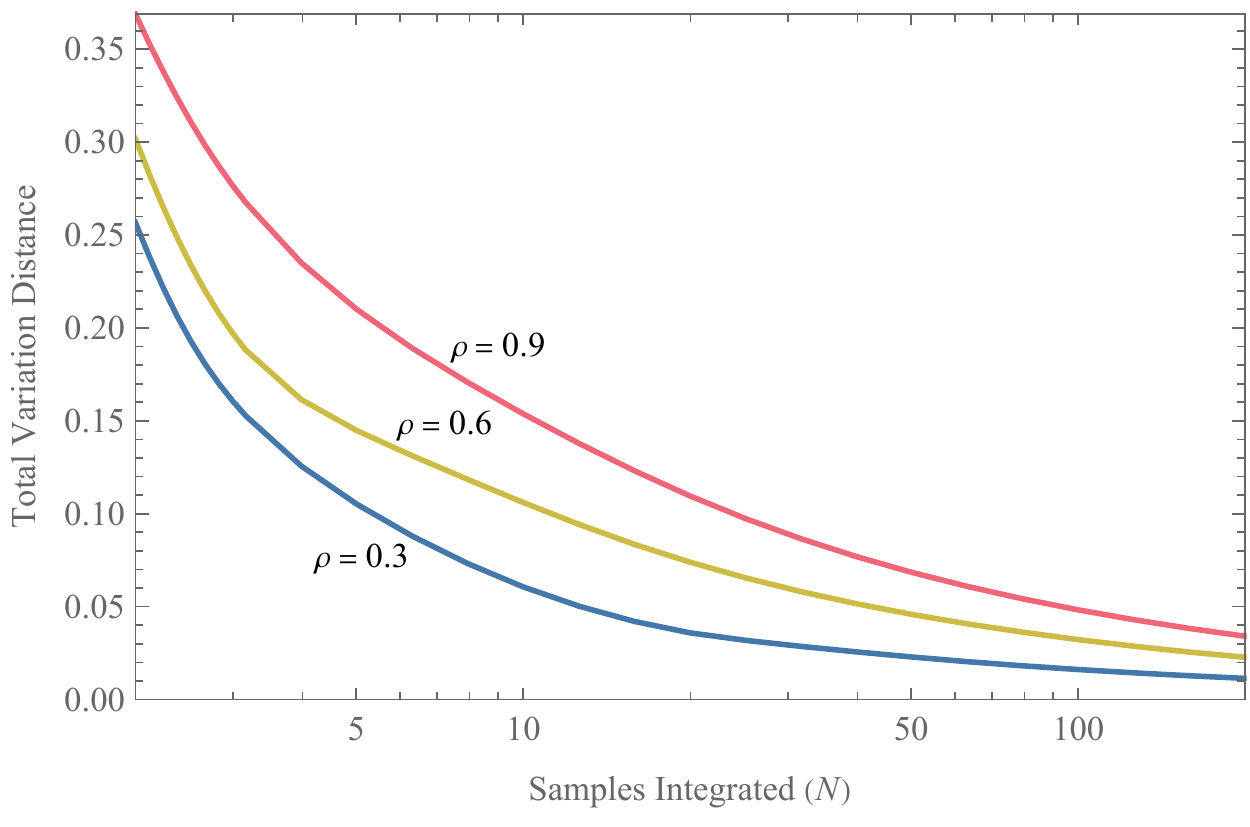}}
	\caption{Total variation distance between the exact probability density function of $\hat{\rho}$ and the approximation described in Proposition \ref{prop:approx_rice}, plotted as a function of $N$, for $\rho \in \{0.3, 0.6, 0.9\}$.}
	\label{fig:totVarDist_rho}
\end{figure}

Because this is an empirical approximation, we can only give plausibility arguments based on numerical results. In Sec.\ V of \cite{luong2019rice}, we showed that this approximation is a good one by simulating radar detection data for various values of $\rho$ and $N$ and fitting Rice PDFs to the resulting histograms. We now build on that work by calculating the total variation distance $\mathit{TVD}_{\hat{\rho}}$ between the exact PDF \eqref{eq:PDF_rho} and the Rician approximation. Fig.\ \ref{fig:totVarDist_rho} shows plots of $\mathit{TVD}_{\hat{\rho}}$ as a function of $N$ for various values of $\rho$. We see that $\mathit{TVD}_{\hat{\rho}}$ increases with $\rho$ and decreases with $N$. At $N = 100$, $\mathit{TVD}_{\hat{\rho}}$ is lower than 0.05 even for $\rho$ as high as 0.9. This is strong evidence that the Rician approximation is indeed a good one when $N \gtrapprox 100$.

\begin{remark}
	Although the expressions \eqref{eq:rho_rice_alpha} and \eqref{eq:rho_rice_beta} were empirically determined, with no basis other than simulations, the fact that $\hat{\rho}$ is approximately Rician for large $N$ has some theoretical grounding. The basic idea is that a Rice distribution is the distribution of the norm of a bivariate normal random vector whose covariance matrix is proportional to the identity. To connect this idea to $\hat{\rho}$, begin by invoking the central limit theorem to approximate $\bar{R}_c$ and $\bar{R}_s$ in \eqref{eq:est_rho} as normally distributed random variables. Next, replace $\bar{P}_1$ and $\bar{P}_2$ with the expected values $\expval{P_1} = 2 \sigma_1^2$ and $\expval{P_2} = 2 \sigma_2^2$, respectively. The result, up to first order in $\rho$, is a Rice-distributed random variable with $\alpha = \rho$ and $\beta = 1/\sqrt{2N}$. For a more detailed development of this argument, see Proposition \ref{prop:approx_detDN} and its proof.
\end{remark}

\begin{figure*}[t]
	\centering
	\subfloat[]{\includegraphics[width=\columnwidth]{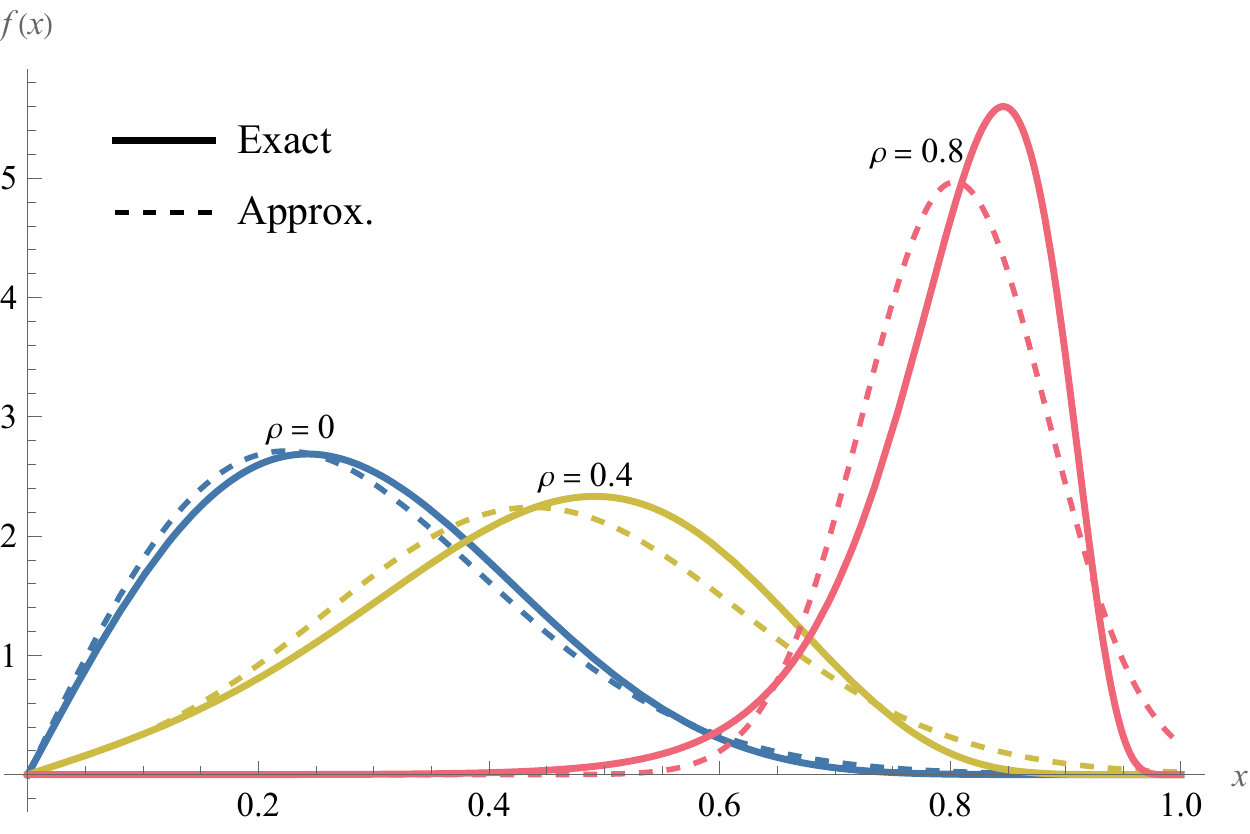}
		\label{subfig:PDF_rho_rho}}
	\hfil
	\subfloat[]{\includegraphics[width=\columnwidth]{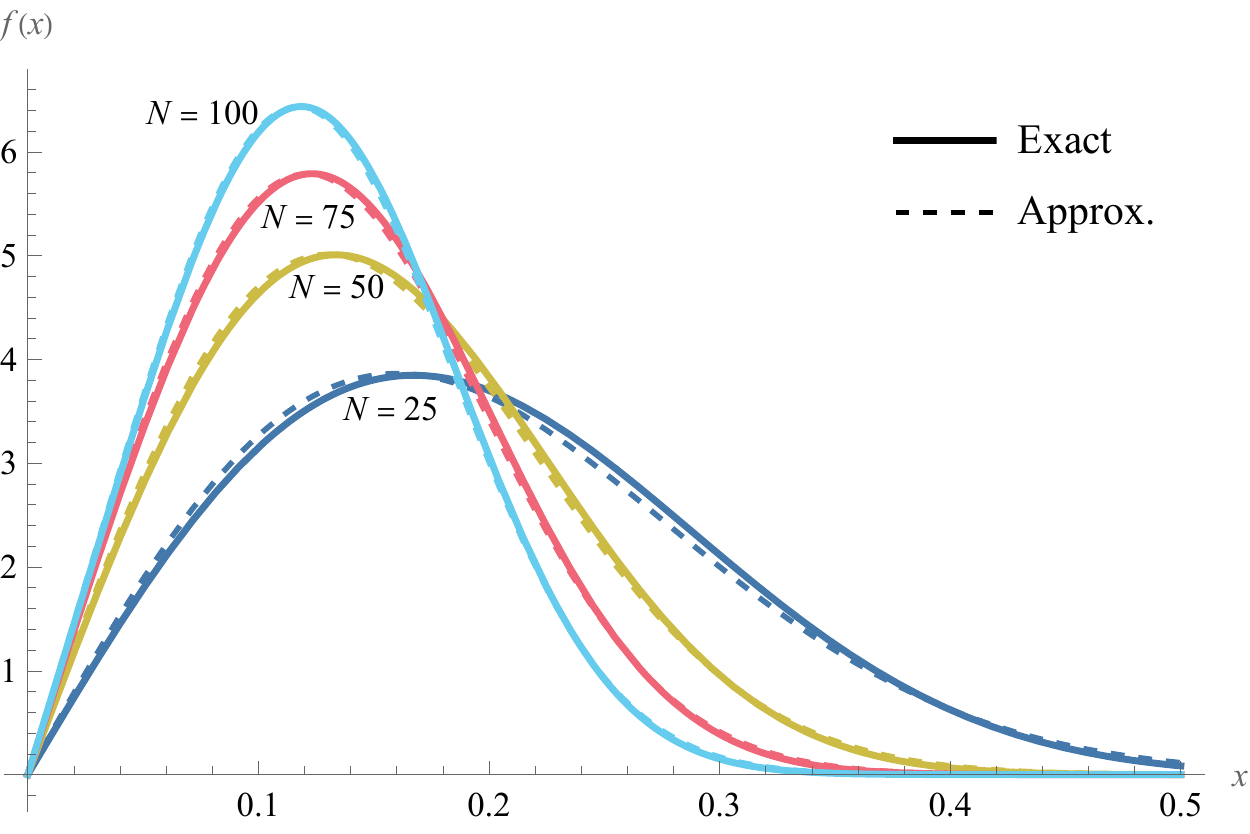}
		\label{subfig:PDF_rho_N}}
	\caption{Probability density function of $\hat{\rho}$, together with the Rice distribution approximations described in Proposition \ref{prop:approx_rice}. In (a), $N = 10$ and $\rho \in \{0, 0.4, 0.8\}$; in (b), $\rho = 0.1$ and $N \in \{25, 50, 75, 100\}$.}
	\label{fig:PDF_rho}
\end{figure*}

In Fig.\ \ref{fig:PDF_rho}, we present plots of $f_{\hat{\rho}}(x | \rho, N)$ for various values of $\rho$ and $N$, together with the Rice distribution approximations. In Fig.\ \ref{subfig:PDF_rho_rho}, we see that the Rice distribution is not always a good fit because $N$ is small. Fig.\ \ref{subfig:PDF_rho_N} shows that the fit becomes quite good as $N$ increases; indeed, at $N = 100$ there is hardly any visible difference between the exact and approximate PDFs.

A word of warning is appropriate here. The Rice distribution approximation outlined in Proposition \ref{prop:approx_rice} must not be confused with the Rice distribution that appears in the context of continuous-wave (CW) radars. It is true that, when a radar transmits a sinusoidal signal and detects using a square-law detector, the detector output is Rice distributed; see e.g.\ Ch.\ 4 of \cite{mahafza2000radar}. However, this is a completely different case from Proposition \ref{prop:approx_rice}. Not only is the transmit signal totally different (sinusoidal waveform vs.\ Gaussian noise), Proposition \ref{prop:approx_rice} describes an \emph{approximation}, whereas the Rice distribution for CW radars is \emph{exact}. In the experience of the authors, the coincidental appearance of the Rice distribution in these two different contexts has led to confusion. Therefore, we emphasize that these two applications of the Rice distribution are unrelated.

\subsection{Exact and Approximate PDFs for $\hat{\phi}$}

Finally, we give the PDF of the estimated phase $\hat{\phi}$. Again, we are able to take over a result from two-channel SAR.

\begin{proposition}
	The PDF of $\hat{\phi}$ is
	\vspace{-\jot}
	\begin{multline} \label{eq:PDF_phi}
		f_{\hat{\phi}}(\theta | \rho, \phi, N) = \frac{\Gamma \big( N + \frac{1}{2} \big) (1 - \rho^2)^N \xi}{2 \sqrt{\pi} \Gamma(N) (1 - \xi^2)^{N + \frac{1}{2}}} 	\\ + \frac{(1-\rho^2)^N}{2\pi} {}_2F_1 \mleft( N, 1; \tfrac{1}{2}; \xi^2 \mright)
	\end{multline}
	where
	\begin{equation}
		\xi \equiv \rho \cos( \theta - \phi ).
	\end{equation}
\end{proposition}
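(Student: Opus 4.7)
The plan is to reduce the problem to the classical interferometric-phase PDF from two-channel SAR, exactly as was done for $\hat{\rho}$. The key manoeuvre is to repackage the four real time series as two complex time series, verify that the resulting complex Gaussian model matches the standard SAR model, and then invoke the known phase-estimator PDF.

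First I would define $z_k[n] = I_k[n] + i Q_k[n]$ for $k \in \{1, 2\}$. The block structure of $\mat{\Sigma}$ in \eqref{eq:QTMS_cov} — namely, diagonal blocks that are scalar multiples of $\mat{1}_2$ together with off-diagonal blocks that are scalar multiples of $\mat{R}(\phi)$ or $\mat{R}'(\phi)$ — is precisely the condition needed for $(z_1, z_2)$ to be a proper (circularly symmetric) zero-mean complex Gaussian vector. A short computation shows that its $2 \times 2$ Hermitian covariance matrix is
\begin{equation}
    \mat{C} = \begin{bmatrix} 2\sigma_1^2 & 2\rho\sigma_1\sigma_2\, e^{\pm i \phi} \\ 2\rho\sigma_1\sigma_2\, e^{\mp i\phi} & 2\sigma_2^2 \end{bmatrix},
\end{equation}
where the upper/lower signs correspond to the standard-noise-radar/QTMS-radar case. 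Next, by direct expansion, $\frac{1}{N}\sum_n z_1[n]\, z_2[n]^{*}$ (standard noise radar) or $\frac{1}{N}\sum_n z_1[n]\, z_2[n]$ (QTMS radar) equals $\bar{R}_c + i\bar{R}_s$, so $\hat{\phi} = \atantwo(\bar{R}_s,\bar{R}_c)$ is exactly the argument of this complex sample cross-product.

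With this identification in place, $\hat{\phi}$ coincides, up to a harmless conjugation of $z_2$, with the maximum-likelihood phase estimator from a multi-look complex interferogram built from $N$ independent looks of a two-channel SAR system with coherence $\rho$ and interferometric phase $\phi$. The PDF of this estimator is precisely \eqref{eq:PDF_phi}; I would quote it from Sec.~VI of \cite{touzi1996statistics} (see also \cite{sikaneta2004detection}), noting that their expression is conventionally written in terms of $\xi = \rho \cos(\theta - \phi)$, which is exactly the substitution we use. No genuinely new derivation is required: the real-to-complex recasting is what transports the result into the noise-radar setting.

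The main obstacle is the bookkeeping of sign conventions. One has to check that the QTMS choice $R_c = I_1I_2 - Q_1Q_2$, $R_s = I_1Q_2 + I_2Q_1$ gives $\bar{R}_c + i\bar{R}_s = \overline{z_1 z_2}$ while the noise-radar choice gives $\overline{z_1 z_2^{*}}$, and that in each case the argument of $\mat{C}$'s off-diagonal entry is $+\phi$ (not $-\phi$), so that the known SAR PDF applies with the stated interpretation of $\phi$. Once these sign choices are made consistently, the chain of substitutions is routine and the proposition follows by reference to the two-channel SAR literature.
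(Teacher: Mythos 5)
Your proposal is correct and takes essentially the same route as the paper: the paper's proof is simply a citation to the two-channel SAR phase-statistics literature (Sec.~2 of Lee et al.\ rather than Sec.~VI of Touzi et al., but the content is the same), and your argument is exactly that reduction. The explicit real-to-complex recasting of the $4\times 4$ covariance into a $2\times 2$ Hermitian one, and the sign bookkeeping you flag, are bridging details the paper leaves implicit, and they check out.
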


\begin{proof}
	See Sec.\ 2 of \cite{lee1994statistics}. Alternative forms of the PDF are given in \cite[eq.\ (12)]{lopes1992phase} and \cite[eq.\ (10)]{joughin1994probability}.
\end{proof}

\begin{figure}[t]
	\centerline{\includegraphics[width=\columnwidth]{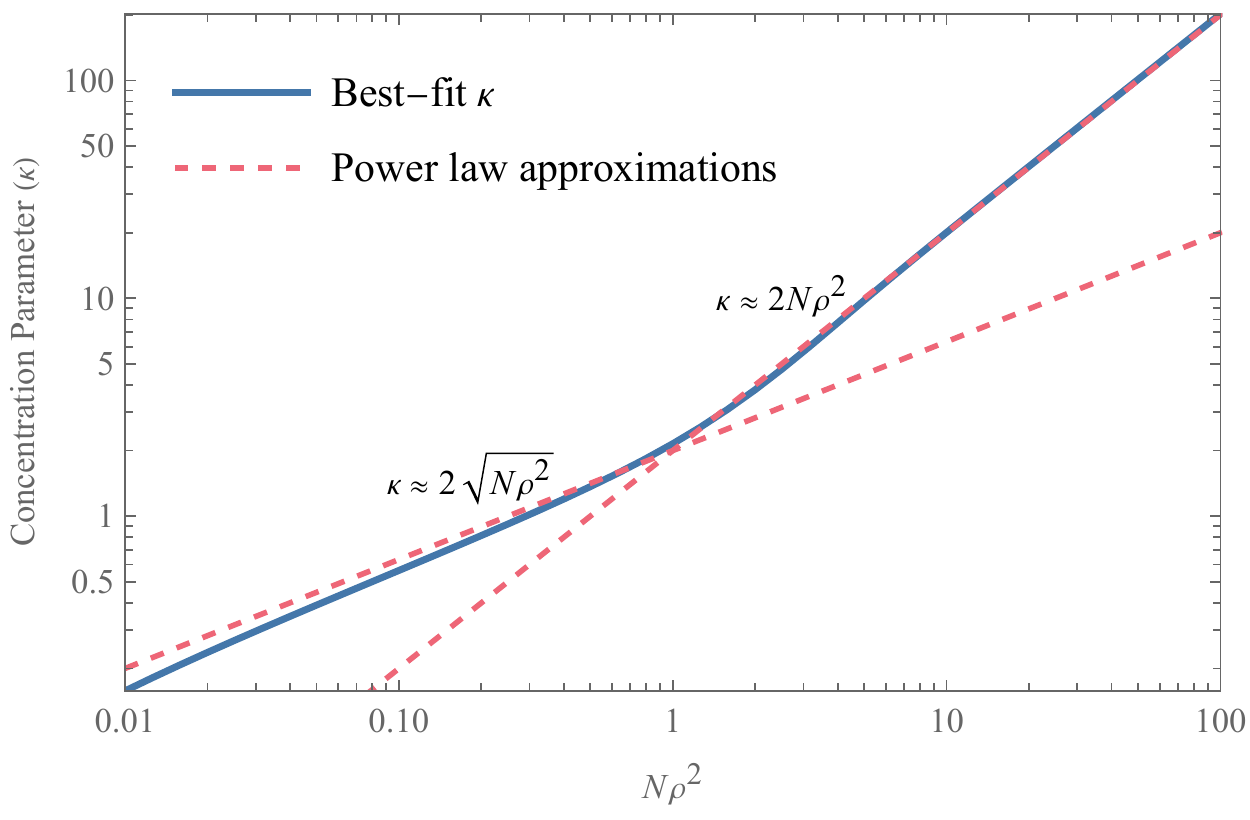}}
	\caption{Concentration parameter $\kappa$ from the von Mises distribution when fitted to the distribution of $\hat{\phi}$, plotted as a function of $N\rho^2$. Also plotted are approximations to the best-fit $\kappa$.}
	\label{fig:kappa_approx}
\end{figure}

This expression is, if anything, even more unwieldy than \eqref{eq:PDF_rho}. However, after plotting the PDF \eqref{eq:PDF_phi} for many values of $\rho$ and $N$, we observed that it always has the same basic shape as the von Mises distribution. This is one of the most basic probability distributions in circular statistics, and can be thought of as the circular analog of the normal distribution. Its PDF is
\begin{equation}
	f(\theta | \mu, \kappa) = \frac{e^{\kappa\cos(\theta - \mu)}}{2\pi I_0(\kappa)},
\end{equation}
where $\mu$ and $\kappa$ are the parameters of the distribution. They correspond to the parameters of the normal distribution in the following sense: when $\kappa \to \infty$, the von Mises distribution approaches the normal distribution with mean $\mu$ and variance $1/\kappa$ (on an appropriate interval of length $2\pi$). Thus, $\mu$ is the mean and $\kappa$ is a ``concentration parameter'': the higher the $\kappa$, the narrower the distribution.

In fitting the von Mises distribution to \eqref{eq:PDF_phi}, choosing $\mu$ is simple enough: since \eqref{eq:PDF_phi} is symmetric about $\phi$, we simply choose $\mu = \phi$. The concentration parameter $\kappa$, however, is less straightforward to choose. To fit a value for $\kappa$, we begin by calculating the so-called ``mean resultant length'',
\begin{equation} \label{eq:phi_R}
	R = \mleft| \int_{-\pi}^{\pi} f_{\hat{\phi}}(\theta | \rho, \phi, N) e^{j\theta} \, d\theta \mright|.
\end{equation}
In \cite{sra2011vonMises}, an approximation of the parameter $\kappa$ is given in terms of the mean resultant length by
\begin{equation} \label{eq:kappa_approx}
	\kappa \approx \frac{R(2 - R^2)}{1 - R^2}.
\end{equation}
In Fig.\ \ref{fig:kappa_approx}, we use \eqref{eq:phi_R} and \eqref{eq:kappa_approx} to plot $\kappa$ as a function of $N\rho^2$. The reason why we plot $\kappa$ against $N\rho^2$ is that $\kappa$ appears to depend on $\rho$ and $N$ only through this combination. This is not evident from \eqref{eq:PDF_phi}, but nevertheless this behavior holds good for a wide variety of values for $\rho$ and $N$. From this plot, we find that when $N\rho^2 \leq 1$, $\kappa \approx 2 \sqrt{N\rho^2}$, otherwise $\kappa \approx 2 N\rho^2$. These approximations are also shown in Fig.\ \ref{fig:kappa_approx}. This leads to the following proposition.

\begin{proposition} \label{prop:approx_vonMises}
	The estimator $\hat{\phi}$ approximately follows a von Mises distribution with parameters
	\begin{subequations}
		\begin{align}
			\mu &= \phi \\
			\kappa &= 
				\begin{cases}
					2 \sqrt{N\rho^2}, & N\rho^2 \leq 1 \\
					2 N\rho^2, & N\rho^2 > 1.
				\end{cases}
		\end{align}
	\end{subequations}
\end{proposition}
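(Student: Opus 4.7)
The plan is to handle the two parameters separately, with $\mu$ falling out from symmetry and $\kappa$ requiring a combination of analytic asymptotics and empirical verification.

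First, I would establish $\mu = \phi$ by inspection of \eqref{eq:PDF_phi}: the PDF depends on $\theta$ only through $\xi = \rho\cos(\theta-\phi)$, so it is symmetric about $\theta = \phi$ on the circle. A von Mises distribution is unimodal and symmetric about its mean, so any good fit must place $\mu$ at the axis of symmetry, forcing $\mu = \phi$.

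Next, the approach for the concentration parameter is to reduce $\kappa$ to a one-dimensional scalar via the mean resultant length. After substituting \eqref{eq:PDF_phi} into \eqref{eq:phi_R} and changing variables $\psi = \theta - \phi$, $R$ becomes an integral on $[-\pi, \pi]$ of a function depending only on $\rho$ and $N$. Expanding the ${}_2F_1$ factor in \eqref{eq:PDF_phi} as a power series in $\xi^2$ and integrating term by term should yield a closed-form series for $R(\rho, N)$. The two asymptotic regimes would then be attacked separately. For $N\rho^2 \ll 1$, a leading-order expansion in $\rho$ should give $R$ small, and \eqref{eq:kappa_approx} reduces to $\kappa \approx 2R$, which I would then match to $\kappa \approx 2\sqrt{N\rho^2}$. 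For $N\rho^2 \gg 1$, the distribution becomes sharply concentrated about $\psi = 0$, so a Laplace-type expansion of $f_{\hat{\phi}}$ near its peak should produce a Gaussian limit whose variance, inverted, supplies the claimed $\kappa \approx 2N\rho^2$. A useful sanity check is that the two piecewise expressions agree at $N\rho^2 = 1$, both giving $\kappa = 2$.

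The main obstacle is the observation that $\kappa$ depends on $\rho$ and $N$ only through the product $N\rho^2$. This scaling is not transparent in \eqref{eq:PDF_phi}, where the prefactor $(1-\rho^2)^N$ and the hypergeometric argument $\rho^2\cos^2(\theta-\phi)$ each depend on $\rho$ separately. I expect this scaling to be \emph{exact} only in the limit $N \to \infty$, $\rho \to 0$ with $N\rho^2$ held fixed, and to hold merely approximately elsewhere. Rigorously controlling the error for all $(\rho, N)$ appears difficult, so I would close the proof in the same spirit as Proposition \ref{prop:approx_rice}: by appealing to the numerical evidence of Fig.\ \ref{fig:kappa_approx}, which demonstrates that the two piecewise approximations track the true $\kappa$ closely over a wide range of parameters, and by noting that the von Mises shape itself was identified by direct visual comparison to \eqref{eq:PDF_phi} at many sample points.
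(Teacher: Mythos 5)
Your proposal is essentially the paper's route, plus an analytic program the paper never carries out. The paper's ``proof'' of Proposition~\ref{prop:approx_vonMises} is entirely empirical: $\mu=\phi$ by the symmetry of \eqref{eq:PDF_phi} (exactly your argument), then a \emph{numerical} evaluation of the mean resultant length \eqref{eq:phi_R}, conversion to $\kappa$ via \eqref{eq:kappa_approx}, an eyeball fit of the resulting curve against $N\rho^2$ (Fig.~\ref{fig:kappa_approx}), and validation by total variation distance (Fig.~\ref{fig:totVarDist_phi}). Your fallback to that numerical evidence is therefore not a concession but the entirety of what the paper actually does, and your observation that the $N\rho^2$ scaling should only be exact in the limit $N\to\infty$, $\rho\to 0$ with $N\rho^2$ fixed is a point the paper leaves implicit.

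One concrete caution about the analytic layer you propose. In the large-$N\rho^2$ regime your Laplace/Gaussian argument does recover $\kappa\approx 2N\rho^2$ exactly: $\hat{\phi}$ is the angle of the approximately normal vector $[\bar{R}_c,\bar{R}_s]^\T$, whose signal-to-noise ratio is $\delta=\rho\sqrt{2N}$ (to first order in $\rho$), and a sharply peaked angular distribution with variance $1/\delta^2$ matches a von Mises with $\kappa=\delta^2=2N\rho^2$. But in the small-$N\rho^2$ regime the same projected-normal picture gives $R\approx\sqrt{\pi/8}\,\delta=\tfrac{\sqrt{\pi}}{2}\sqrt{N\rho^2}$, hence $\kappa\approx 2R\approx\sqrt{\pi}\sqrt{N\rho^2}\approx 1.77\sqrt{N\rho^2}$, \emph{not} $2\sqrt{N\rho^2}$. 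The coefficient $2$ in the proposition is a global fit over $0<N\rho^2\leq 1$, apparently anchored so that the two branches meet at $N\rho^2=1$; it is not the rigorous $N\rho^2\to 0$ asymptote. So the ``matching'' step you describe for the square-root branch, and the continuity sanity check at $N\rho^2=1$, would not literally go through from first principles. This is not fatal---the proposition is avowedly an approximation and your plan correctly terminates in the same numerical substantiation the paper relies on---but you should not expect the leading-order expansion to reproduce the stated constant.
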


\begin{figure}[t]
	\centerline{\includegraphics[width=\columnwidth]{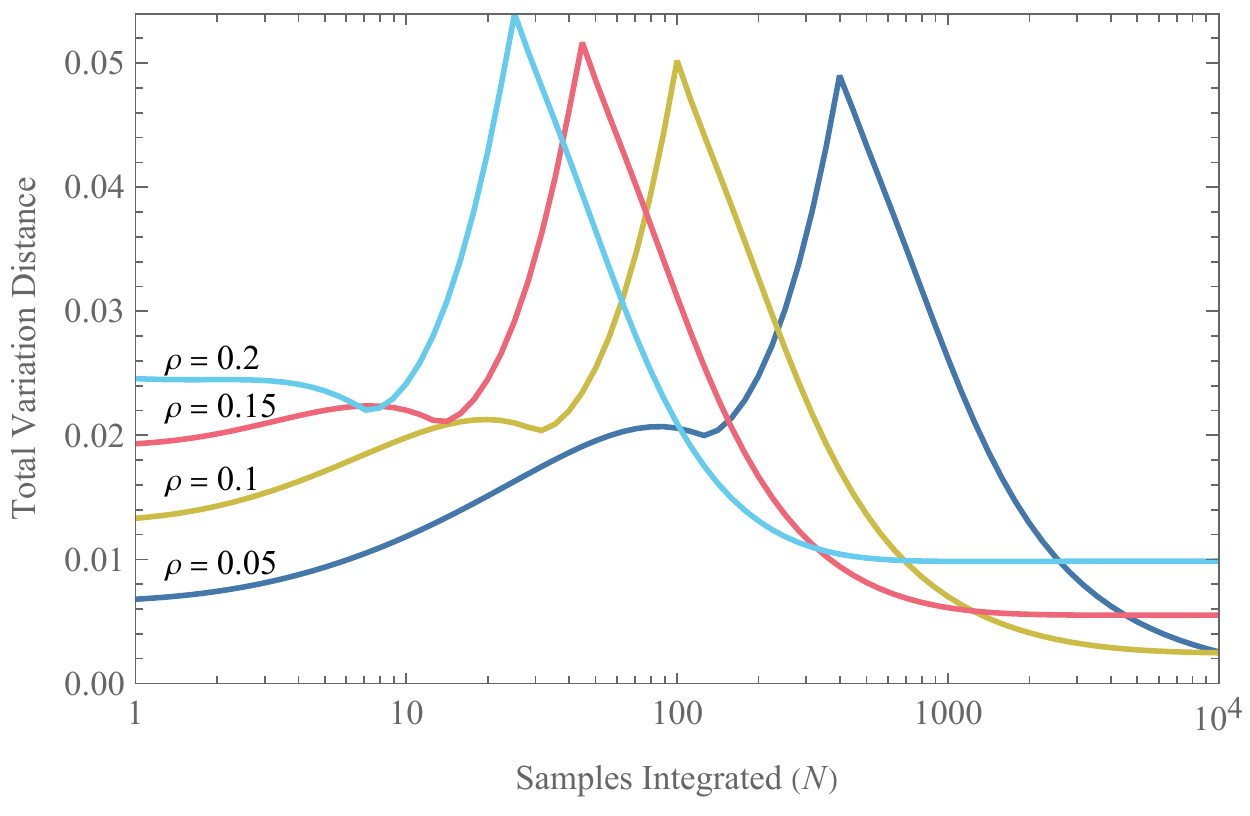}}
	\caption{Total variation distance between the exact probability density function of $\hat{\phi}$ and the approximation described in Proposition \ref{prop:approx_vonMises}, plotted as a function of $N$, for $\rho \in \{0.05, 0.1, 0.15, 0.2\}$.}
	\label{fig:totVarDist_phi}
\end{figure}

\begin{figure}[t]
	\centering
	\subfloat[]{\includegraphics[width=\columnwidth]{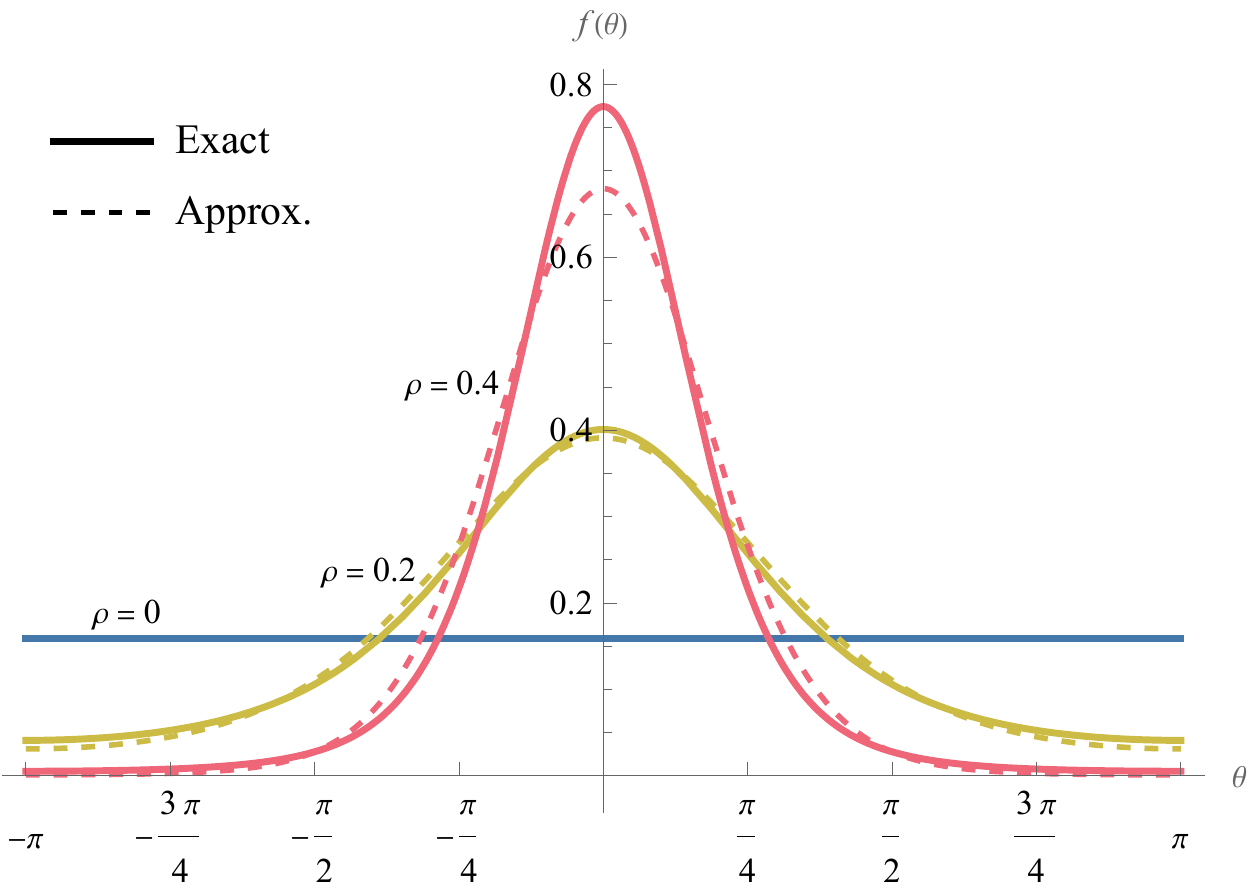}
		\label{subfig:PDF_phi_rho}} \\
	\subfloat[]{\includegraphics[width=\columnwidth]{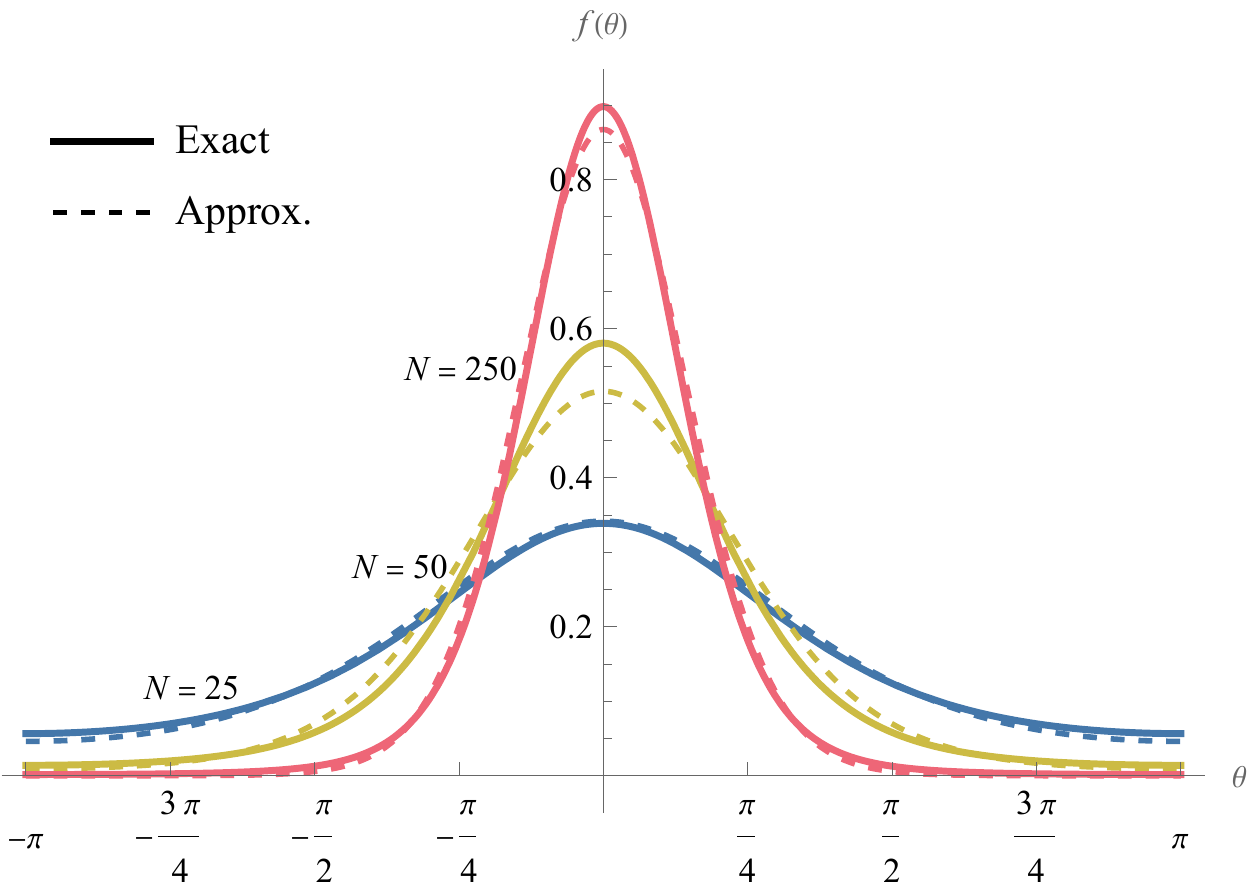}
		\label{subfig:PDF_phi_N}}
	\caption{Probability density function of $\hat{\phi}$, together with the von Mises distribution approximation described in Proposition \ref{prop:approx_vonMises}. In (a), $N = 10$ and $\rho \in \{0, 0.2, 0.4\}$; in (b), $\rho = 0.1$ and $N \in \{25, 50, 250\}$. For all cases, $\phi = 0$.}
	\label{fig:PDF_phi}
\end{figure}

To show the plausibility of this empirical result, we again turn to the TVD. Fig.\ \ref{fig:totVarDist_phi} shows plots of $\mathit{TVD}_{\hat{\phi}}$ as a function of $N$ for various values of $\rho$. (Unfortunately, numerical instabilities prevented us from producing plots when $\rho$ is large, but we expect the behavior to be largely the same.) Unlike $\mathit{TVD}_{\hat{\rho}}$, $\mathit{TVD}_{\hat{\phi}}$ does not appear to decay fully to 0 as $N$ increases. However, an inspection of the vertical axis in Fig.\ \ref{fig:totVarDist_phi} shows that $\mathit{TVD}_{\hat{\phi}}$ is small for \emph{all} values of $N$. There are peaks corresponding to $N\rho^2 = 1$, which may perhaps be expected: this point marks the transition between the square-root and linear regimes in Fig.\ \ref{fig:kappa_approx}. We conclude that Proposition \ref{prop:approx_vonMises} is well-substantiated by numerical evidence.

Fig.\ \ref{fig:PDF_phi} shows plots of $f_{\hat{\phi}}(\theta | \rho, 0, N)$ for various values of $\rho$ and $N$, as well as the corresponding von Mises distribution approximations. (We show only the case $\phi = 0$ because the shape of the plots remains the same for any value of $\phi$; only the location of the peak changes.) In all cases, the exact distribution is well-approximated by a von Mises distribution.

\section{Target Detection and the Correlation Coefficient}
\label{sec:target_detection}

In this section, we apply the preceding results to the analysis of detection performance for noise-type radars. Of the four parameters that appear in \eqref{eq:QTMS_cov}, the correlation coefficient $\rho$ is the most important for target detection. In the absence of clutter, the presence or absence of a target can be reduced to a hypothesis test on $\rho$:
\begin{equation} \label{eq:hypotheses}
	\begin{alignedat}{3}
		H_0&: \rho = 0 &&\quad\text{Target absent} \\
		H_1&: \rho > 0 &&\quad\text{Target present}
	\end{alignedat}
\end{equation}
The reason for this is as follows. If there exists a correlation between the reference and received signals, there must be a target to reflect the transmitted signal to the receiver. If there were no target, the only signal received by the radar would be uncorrelated background noise. Now, it is obvious from the form of \eqref{eq:QTMS_cov} that any correlation between signals can only occur when $\rho > 0$. This explains the form of the hypothesis test \eqref{eq:hypotheses}.

\subsection{Generalized Likelihood Ratio Test}

One of the best-known methods for hypothesis testing is the generalized likelihood ratio (GLR) test. This entails maximizing the likelihood function under the two hypotheses. In previous work, we considered the case where the values of the nuisance parameters $\sigma_1$, $\sigma_2$, and $\phi$ were known \cite{luong2022likelihood}. In this paper, since we have ML estimates for those parameters, we need not make the same assumption. In fact, calculating the GLR test statistic---or the GLR \emph{detector}---is a simple task since we have the ML parameters.

Unlike the complicated GLR detector derived in \cite{luong2022likelihood} under the assumption that $\sigma_1 = \sigma_2 = 1$ and $\phi = 0$, the GLR detector takes on a relatively simple form when all the parameters are unknown. In fact, it is equivalent to $\hat{\rho}$ itself, as we will now prove.

\begin{proposition}
	The GLR test is equivalent to using $\hat{\rho}$ as a test statistic.
\end{proposition}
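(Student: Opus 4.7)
The plan is to form the generalized likelihood ratio $\Lambda = L(\hat{\mat{\Sigma}}_1)/L(\hat{\mat{\Sigma}}_0)$, where $\hat{\mat{\Sigma}}_1$ is the ML estimator with all four parameters free and $\hat{\mat{\Sigma}}_0$ is the ML estimator under the constraint $\rho = 0$, and then to show that $\Lambda$ is a strictly monotonic function of $\hat{\rho}$. Since thresholding any strictly monotonic transformation of a test statistic gives an equivalent test, this is enough to establish the claim.

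First I would note that setting $\rho=0$ decouples $\mat{\Sigma}$ into two independent $2\times 2$ blocks, so the ML estimators of $\sigma_1$ and $\sigma_2$ under $H_0$ coincide with \eqref{eq:est_sigma1}--\eqref{eq:est_sigma2}, and $\phi$ drops out. Next I would compute the two quantities that appear in \eqref{eq:log_l}. For the determinant, the block-matrix formula combined with $\mat{R}'(\phi)^\T \mat{R}'(\phi) = \mat{1}_2$ (and the analogous identity for $\mat{R}(\phi)$) gives $|\mat{\Sigma}(\sigma_1,\sigma_2,\rho,\phi)| = \sigma_1^4 \sigma_2^4 (1-\rho^2)^2$. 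Inverting the same block structure and expanding yields $\overline{\vec{x}^\T \mat{\Sigma}^{-1}\vec{x}} = (1-\rho^2)^{-1}\bigl[\bar{P}_1/\sigma_1^2+\bar{P}_2/\sigma_2^2 - 2\rho(\sigma_1\sigma_2)^{-1}(\bar{R}_c \cos\phi + \bar{R}_s \sin\phi)\bigr]$, where the cross term is obtained by direct multiplication and matches the definitions \eqref{eq:aux_Rc}--\eqref{eq:aux_Rs} in either sign convention.

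Substituting the Proposition's estimators collapses this expression: $\hat{\sigma}_1^2 = \bar{P}_1/2$ and $\hat{\sigma}_2^2 = \bar{P}_2/2$ make the diagonal contribution equal to $4$, while $\hat{\phi} = \atantwo(\bar{R}_s, \bar{R}_c)$ together with $\hat{\rho}\sqrt{\bar{P}_1 \bar{P}_2} = \sqrt{\bar{R}_c^2+\bar{R}_s^2}$ forces $\bar{R}_c \cos\hat{\phi} + \bar{R}_s \sin\hat{\phi} = \hat{\rho}\sqrt{\bar{P}_1\bar{P}_2}$, so that the cross-term contribution equals $4\hat{\rho}^2$ and the entire quadratic form reduces to $4$. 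Under $H_0$ the quadratic form is trivially $\bar{P}_1/\hat{\sigma}_1^2 + \bar{P}_2/\hat{\sigma}_2^2 = 4$ as well. Hence these contributions to $\ell(\hat{\mat{\Sigma}}_1)$ and $\ell(\hat{\mat{\Sigma}}_0)$ cancel, and only the determinant terms survive, giving $\Lambda = (|\hat{\mat{\Sigma}}_0|/|\hat{\mat{\Sigma}}_1|)^{N/2} = (1-\hat{\rho}^2)^{-N}$, which is strictly increasing on $\hat{\rho}\in[0,1)$. Therefore the GLR test is equivalent to thresholding $\hat{\rho}$.

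I expect the main obstacle to be purely bookkeeping: verifying the identity $\vec{x}_1^\T \mat{R}'(\phi)\vec{x}_2 = R_c\cos\phi + R_s\sin\phi$ (and its rotation-matrix analog) under the sign conventions of \eqref{eq:aux_Rc}--\eqref{eq:aux_Rs}, and then confirming that evaluating at $\hat{\phi}$ via \texttt{atan2} produces exactly the cancellation against the $1/(1-\hat{\rho}^2)$ prefactor. Once these two algebraic identities are in hand, the rest is a one-line combination of the determinant formula with the log-likelihood \eqref{eq:log_l}.
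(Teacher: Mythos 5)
Your proposal is correct and follows essentially the same route as the paper: substitute the ML estimators into the log-likelihood under both hypotheses, observe that the quadratic-form contributions cancel (both equal $4$), and conclude that the GLR statistic reduces to $-2N\ln(1-\hat{\rho}^2)$, a strictly increasing function of $\hat{\rho}$. The paper states this computation more tersely, but the algebra you spell out (the determinant identity and the collapse of the cross term at $\hat{\phi}$) is exactly what its displayed equation relies on.
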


\begin{proof}
	The GLR test statistic for the hypotheses \eqref{eq:hypotheses} may be written as a difference of log-likelihoods:
	\begin{equation} \label{eq:D_GLR}
		D_\text{GLR} = -2[ \ell(\hat{\sigma}_1, \hat{\sigma}_2, 0, \hat{\phi}) - \ell(\hat{\sigma}_1, \hat{\sigma}_2, \hat{\rho}, \hat{\phi}) ].
	\end{equation}
	Notice that the same estimators appear in both terms. This is permissible because the ML estimates $\hat{\sigma}_1$ and $\hat{\sigma}_2$ are the same under both hypotheses in \eqref{eq:hypotheses}. (The likelihood function does not depend on $\phi$ when $\rho = 0$, so it does not matter what value of $\phi$ is substituted.) See Appendix \ref{app:ml} for details.
	
	Substituting \eqref{eq:est_sigma1}--\eqref{eq:est_phi} into \eqref{eq:log_l}, we obtain
	\begin{align}
		D_\text{GLR} &= 2N \ln \mleft( \frac{\bar{P}_1 \bar{P}_2}{\bar{P}_1 \bar{P}_2 - \bar{R}_c^2 - \bar{R}_s^2} \mright) \nonumber \\
			&= -2N \ln(1 - \hat{\rho}^2).
	\end{align}
	This is a strictly increasing function of $\hat{\rho}$. Since applying a strictly increasing function to a test statistic is equivalent to reparameterizing the decision threshold, the test itself does not change. The proposition follows.
\end{proof}

The gold standard for evaluating radar detection performance is the ROC curve, which plots the probability of detection $\pd$ against the probability of false alarm $\pfa$. In the case where $\hat{\rho}$ is used as a detector, obtaining the exact ROC curve requires an integration of \eqref{eq:PDF_rho}, which is extremely difficult. However, with the help of Proposition \ref{prop:approx_rice}, we can derive a closed-form approximation of the ROC curve.

\begin{proposition}
	When $N \gtrapprox 100$, the ROC curve for the $\hat{\rho}$ detector is
	\begin{equation} \label{eq:ROC_rho}
		\pd(\pfa | \rho, N) = Q_1 \mleft( \frac{\rho \sqrt{2N}}{1 - \rho^2}, \frac{\sqrt{2N \big( 1 - \pfa^{1/(N-1)} \big)}}{1 - \rho^2} \mright).
	\end{equation}
	where $Q_1(\cdot, \cdot)$ is the Marcum $Q$-function of order 1 (not to be confused with the quadrature voltage $Q_1$).
\end{proposition}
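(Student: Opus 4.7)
The plan is to combine the exact distribution of $\hat{\rho}$ under $H_0$ with the Rice approximation under $H_1$. By definition, the ROC curve satisfies $\pd(\pfa) = \Pr(\hat{\rho} > \tau \mid H_1)$, where the threshold $\tau$ is pinned down by $\Pr(\hat{\rho} > \tau \mid H_0) = \pfa$. So I would carry out the proof in two steps: first invert the $H_0$ survival function to express $\tau$ in terms of $\pfa$, then evaluate the $H_1$ survival function at that $\tau$.

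For the first step, I would specialize the exact PDF \eqref{eq:PDF_rho} to $\rho = 0$. Since ${}_2F_1(N, N; 1; 0) = 1$, the PDF collapses to $f_{\hat{\rho}}(x \mid 0, N) = 2(N-1) x (1 - x^2)^{N-2}$ on $[0, 1]$. The substitution $u = 1 - x^2$ integrates this in closed form, giving the tail probability $\Pr(\hat{\rho} > \tau \mid 0, N) = (1 - \tau^2)^{N-1}$. Setting this equal to $\pfa$ and solving yields $\tau = \sqrt{1 - \pfa^{1/(N-1)}}$, which is precisely the second argument of $Q_1$ in \eqref{eq:ROC_rho} up to the factor $\sqrt{2N}/(1-\rho^2)$.

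For the second step, I would appeal to Proposition \ref{prop:approx_rice}: for $N \gtrapprox 100$, under $H_1$ the estimator $\hat{\rho}$ is approximately Rice-distributed with $\alpha = \rho$ and $\beta = (1 - \rho^2)/\sqrt{2N}$. The complementary CDF of the Rice distribution is $\Pr(X > \tau) = Q_1(\alpha/\beta, \tau/\beta)$, a standard identity. Substituting the parameters and then substituting the expression for $\tau$ from step one produces \eqref{eq:ROC_rho} directly.

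There is no serious obstacle, since the heavy lifting was already done when Proposition \ref{prop:approx_rice} was established. The only point worth flagging is the asymmetric treatment of the two hypotheses: under $H_0$ I use the exact PDF, whereas under $H_1$ I use the Rice approximation. This is consistent because (i) the $\rho = 0$ PDF admits an elementary antiderivative while the $\rho > 0$ case does not, and (ii) at $\rho = 0$ the Rice approximation would instead give the Rayleigh tail $e^{-N\tau^2}$, whose inverse $\tau = \sqrt{-\ln(\pfa)/N}$ differs from $\sqrt{1 - \pfa^{1/(N-1)}}$ only to order $1/N$, so the two conventions agree in the regime where Proposition \ref{prop:approx_rice} is valid. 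Keeping the exact $H_0$ expression is preferable because it avoids introducing an unnecessary layer of approximation into the ROC formula.
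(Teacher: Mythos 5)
Your proposal is correct and follows essentially the same route as the paper: the exact $\rho=0$ tail $(1-T^2)^{N-1}$ is inverted to fix the threshold, and the Rician approximation of Proposition \ref{prop:approx_rice} with its complementary CDF $Q_1(\alpha/\beta,\tau/\beta)$ supplies $\pd$. Your closing remark on the asymmetric use of the exact $H_0$ law versus the approximate $H_1$ law also matches the paper's own discussion of why \eqref{eq:ROC_rho} differs from the earlier form \eqref{eq:ROC_rho_old} only to order $1/N$.
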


\begin{proof}
	In the case where $\rho = 0$, the hypergeometric function in \eqref{eq:PDF_rho} drops out and it is possible to integrate the expression directly, yielding the cumulative density function (CDF)
	\begin{equation}
		F_{\hat{\rho}}(x | 0, N) = 1 - (1 - x^2)^{N-1}.
	\end{equation}
	For a given detection threshold $T$, the probability of false alarm is the probability that $\hat{\rho} > T$ given that $\rho = 0$. This is given by
	\begin{equation} \label{eq:det_rho_pFA}
		\pfa(T) = 1 - F_{\hat{\rho}}(x | 0, N) = (1 - T^2)^{N-1}.
	\end{equation}
	Inverting this, we obtain
	\begin{equation} \label{eq:det_rho_threshold}
		T = \sqrt{1 - \pfa^{1/(N-1)}}.
	\end{equation}
	Because $\hat{\rho} \geq 0$, we retain only the positive square root.
	
	To obtain the probability of detection, we make use of the Rician approximation described in Proposition \ref{prop:approx_rice}. The CDF of the Rice distribution is
	\begin{equation}
		F_\text{Rice}(x | \alpha, \beta) = 1 - Q_1 \mleft( \frac{\alpha}{\beta}, \frac{x}{\beta} \mright).
	\end{equation}
	Substituting \eqref{eq:rho_rice_alpha} and \eqref{eq:rho_rice_beta} yields
	\begin{equation} \label{eq:rho_cdf}
		F(x | \rho, N) = 1 - Q_1 \mleft( \frac{\rho \sqrt{2N}}{1 - \rho^2}, \frac{x \sqrt{2N}}{1 - \rho^2} \mright).
	\end{equation}
	The probability of detection is
	\begin{equation}
		\pd(T) = 1 - F(T | \rho, N);
	\end{equation}
	the proposition follows upon substituting \eqref{eq:det_rho_threshold}.
\end{proof}

\begin{remark}
	In \cite{luong2019rice}, a slightly different expression for the ROC curve was derived:
	\begin{equation} \label{eq:ROC_rho_old}
		\pd(\pfa | \rho, N) = Q_1 \mleft( \frac{\rho \sqrt{2N}}{1 - \rho^2}, \frac{\sqrt{-2 \ln \pfa}}{1 - \rho^2} \mright).
	\end{equation}
	This form arises from using the Rician approximation to calculate both $\pd$ and $\pfa$. In the above proposition, we have replaced the latter with the exact value of $\pfa$. There is, however, not much difference between the two for large $N$. The reader may notice a curious connection between \eqref{eq:det_rho_pFA}, the appearance of $\ln \pfa$ in \eqref{eq:ROC_rho_old}, and the well-known representation of the exponential function as a limit, $e^x = \lim_{N \to \infty} (1 + x/N)^N$.
\end{remark}

\begin{figure*}[t]
	\centering
	\subfloat[]{\includegraphics[width=\columnwidth]{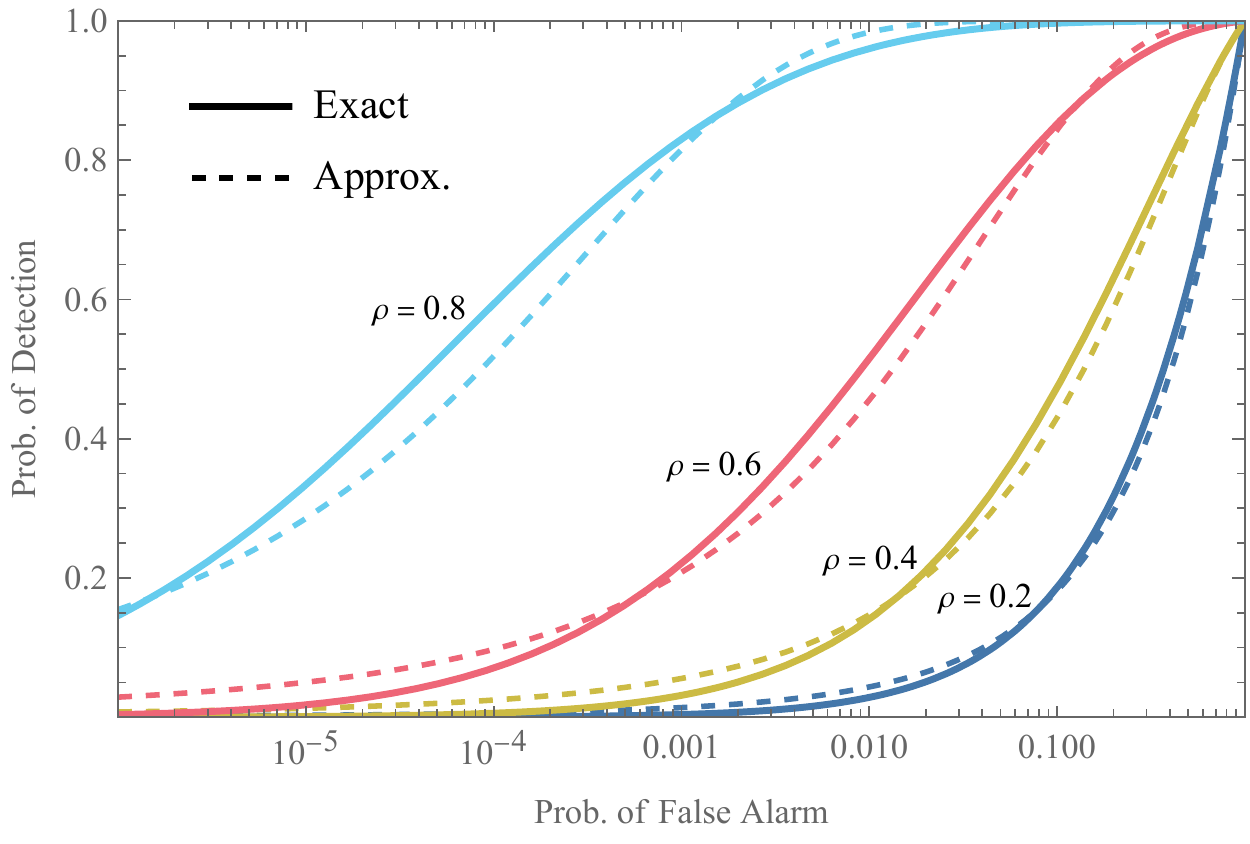}
		\label{subfig:ROC_rho_rho}}
	\hfil
	\subfloat[]{\includegraphics[width=\columnwidth]{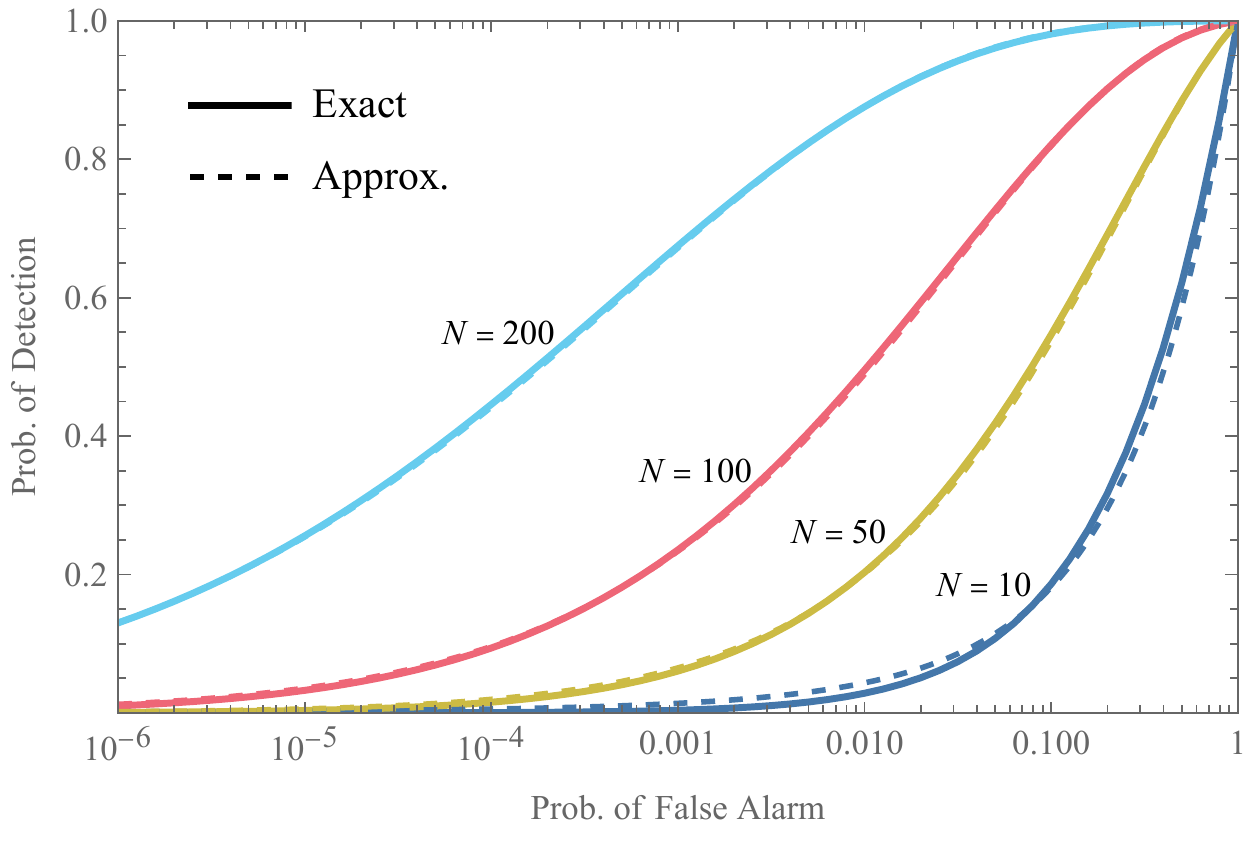}
		\label{subfig:ROC_rho_N}}
	\caption{ROC curves for $\hat{\rho}$, together with approximations calculated using \eqref{eq:ROC_rho}. In (a), $N = 10$ and $\rho \in \{0.2, 0.4, 0.6, 0.8\}$; in (b), $\rho = 0.2$ and $N \in \{10, 50, 100, 200\}$.}
	\label{fig:ROC_rho}
\end{figure*}

Fig.\ \ref{fig:ROC_rho} shows ROC curves for the $\hat{\rho}$ detector together with corresponding approximations obtained using \eqref{eq:ROC_rho}. In all cases, the approximation gives a fair idea of the behavior of the exact ROC curve. But even at $N = 50$---half the stated value of $N = 100$ for the validity of the approximation---the approximate curve is visually indistinguishable from the exact curve.

\subsection{Target Detection and MFN Estimation}

In \cite{dawood2001roc}, Dawood and Narayanan proposed and analyzed a design for a noise radar receiver which, in effect, calculates the detector
\begin{equation} \label{eq:det_DN}
	D_\mathrm{DN} = \frac{N}{4} \sqrt{\bar{R}_c^2 + \bar{R}_s^2}.
\end{equation}
Comparing this with \eqref{eq:est_rho}, the connection between $D_\mathrm{DN}$ and $\hat{\rho}$ is obvious. It bears a similar relation to $\hat{\rho}$ as covariance does to correlation, one being a normalized form of the other.

The main motivation for $D_\mathrm{DN}$ is that it arises naturally from performing matched filtering on the complex-valued signal $I_1[n] + j Q_1[n]$ using the reference signal $I_2[n] + j Q_2[n]$. However, it is interesting to note that $D_\mathrm{DN}$ can also be motivated using the MFN approach outlined in Sec.\ \ref{subsec:MFN_est}. One way is to calculate the norm of the difference between \eqref{eq:QTMS_cov} under the two hypotheses \eqref{eq:hypotheses}:
\begin{equation}
	\mleft\lVert \mat{\Sigma}(\sigma_1, \sigma_2, 0, \phi) - \mat{\Sigma}(\sigma_1, \sigma_2, \rho, \phi) \mright\rVert_F = 2 \rho \sigma_1 \sigma_2
\end{equation}
Substituting the MFN parameter estimates \eqref{eq:est_sigma1}--\eqref{eq:est_rho} yields $\sqrt{\bar{R}_c^2 + \bar{R}_s^2} = 4 D_\mathrm{DN}/N$. The factor of $4/N$, of course, does not affect the performance of the detector in any way.

Another way to see the connection between $D_\mathrm{DN}$ and MFN parameter estimation is inspired by the GLR test. Instead of calculating the difference between log-likelihoods, we calculate the difference between the squares of the minimized Frobenius norms:
\begin{align}
	&\min \, \mleft\lVert \mat{\Sigma}(\sigma_1, \sigma_2, 0, \phi) - \hat{\mat{S}} \mright\rVert_F^2 - \min \, \mleft\lVert \mat{\Sigma}(\sigma_1, \sigma_2, \rho, \phi) - \hat{\mat{S}} \mright\rVert_F^2 \nonumber \\
		&\qquad = \bar{R}_c^2 + \bar{R}_s^2.
\end{align}
The second line follows from \eqref{eq:g1_min} and \eqref{eq:g1_rho_0} in Appendix \ref{app:mfn}. This can be interpreted as the (squared) excess error that accrues from modeling the radar measurement data using the diagonal covariance matrix $\mat{\Sigma}(\sigma_1, \sigma_2, 0, \phi)$ as opposed to the more general form $\mat{\Sigma}(\sigma_1, \sigma_2, \rho, \phi)$. If the excess error is small, then the data is well-described by a diagonal covariance matrix and the target is probably absent, while the opposite is true if the excess error is large. And when considered as a detector, this excess error is equivalent to $D_\mathrm{DN}$.

For completeness, we quote the expressions for the PDF and CDF of $D_\mathrm{DN}$ that were derived by Dawood and Narayanan.

\begin{proposition}
	The PDF of $D_\mathrm{DN}$ for $x \geq 0$ is
	\vspace{-\jot}
	\begin{multline} \label{eq:det_DN_PDF}
		f_\text{DN}(x | \sigma_1, \sigma_2, \rho, N) = \\ 
			\frac{8 \tilde{x}^N}{\sigma_1\sigma_2 (1-\rho^2) \Gamma(N)} K_{N-1} \mleft( \frac{2\tilde{x}}{1-\rho^2} \mright) I_0 \mleft( \frac{2\rho\tilde{x}}{1-\rho^2} \mright)
	\end{multline}
	where $\tilde{x} \equiv 2x/(\sigma_1\sigma_2)$ and $K_{N-1}$ is the modified Bessel function of the second kind of order $N-1$. The CDF is
	\vspace{-\jot}
	\begin{multline} \label{eq:det_DN_CDF}
		F_\mathrm{DN}(x | \sigma_1, \sigma_2, \rho, N) = \\ 
			1 - \frac{2 \tilde{x}^N}{\Gamma(N)} \sum_{m=0}^{\infty} \rho^m K_{N+m} \mleft( \frac{2\tilde{x}}{1-\rho^2} \mright) I_m \mleft( \frac{2\rho\tilde{x}}{1-\rho^2} \mright).
	\end{multline}
\end{proposition}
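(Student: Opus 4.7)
The plan is to recognize $D_\mathrm{DN}$ as proportional to the magnitude of a complex matched-filter output, and then obtain its distribution by conditioning on the reference signal. Introduce the complex envelopes $z_1[n] = I_1[n] + jQ_1[n]$ and $z_2[n] = I_2[n] + jQ_2[n]$, with the sign in the definitions of $R_c, R_s$ chosen so that
\[
N(\bar{R}_c + j\bar{R}_s) \;=\; W \;\equiv\; \sum_{n=1}^{N} z_1[n]\, z_2^*[n],
\]
so that $D_\mathrm{DN} = |W|/4$. Reading off (\ref{eq:QTMS_cov}), the pair $(z_1[n], z_2[n])$ is zero-mean circularly-symmetric complex Gaussian with $\mathrm{E}[|z_1|^2] = 2\sigma_1^2$, $\mathrm{E}[|z_2|^2] = 2\sigma_2^2$, and $\mathrm{E}[z_1 z_2^*] = 2\rho\sigma_1\sigma_2 e^{j\phi}$.

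First I would condition on the entire reference record $\{z_2[n]\}$. By the standard conditional decomposition of a jointly complex Gaussian pair, each $z_1[n]$ is conditionally complex Gaussian with mean $\rho (\sigma_1/\sigma_2) e^{j\phi}\, z_2[n]$ and variance $2\sigma_1^2(1-\rho^2)$, independently across $n$. Because $W$ is a linear functional of the $z_1[n]$'s, it is itself conditionally complex Gaussian: letting $E \equiv \sum_n |z_2[n]|^2$, one has conditional mean $\rho(\sigma_1/\sigma_2) e^{j\phi} E$ and conditional variance $2\sigma_1^2(1-\rho^2) E$. Therefore $|W|$, given $E$, is a Rice random variable whose parameters depend only on $E$; equivalently, its conditional CDF can be written in Marcum-$Q$ form $1 - Q_1(\alpha/\beta, x_{\!W}/\beta)$ where $\alpha, \beta$ are the appropriate functions of $E$.

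Next, I would integrate out $E$. Since $z_2$ is complex Gaussian with per-sample variance $2\sigma_2^2$, the quantity $E/\sigma_2^2$ is chi-squared with $2N$ degrees of freedom, so $E$ has a gamma density $\propto E^{N-1} e^{-E/(2\sigma_2^2)}$. The marginal density of $|W|$ is the integral of the conditional Rice density against this gamma density. After rescaling to the variable $x = |W|/4$ and substituting $\tilde{x} = 2x/(\sigma_1\sigma_2)$, the exponent becomes a linear-plus-reciprocal form in $E$, and one invokes the standard representation
\[
\int_0^{\infty}\! t^{\nu-1} \exp\!\left(-\alpha t - \tfrac{\beta}{t}\right) dt = 2\left(\tfrac{\beta}{\alpha}\right)^{\!\nu/2} K_{\nu}\!\left(2\sqrt{\alpha\beta}\right)
\]
to pull out the $K_{N-1}$ factor in (\ref{eq:det_DN_PDF}); the Bessel $I_0$ factor survives unchanged from the conditional Rice density. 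For the CDF, expand the Marcum $Q$-function in the conditional survival function via
\[
Q_1(a,b) = \exp\!\left(-\tfrac{a^2+b^2}{2}\right)\sum_{m=0}^{\infty}\!\left(\tfrac{a}{b}\right)^{m} I_m(ab),
\]
then integrate term-by-term against the same gamma density. Each term produces a $K_{N+m}$ via the same Bessel identity, while the ratio $(a/b)^m$ supplies the $\rho^m$ weight, reproducing (\ref{eq:det_DN_CDF}).

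The main obstacle is not conceptual but bookkeeping: correctly tracking the $N/4$ in the definition of $D_\mathrm{DN}$, the factor $2$ distinguishing $\mathrm{E}[|z_i|^2]$ from $\sigma_i^2$, the $(1-\rho^2)$ from the conditional variance, and the substitution to $\tilde{x} = 2x/(\sigma_1\sigma_2)$, so that the resulting constants coalesce exactly into the stated forms. A secondary subtlety is justifying the term-by-term integration of the Marcum series against the gamma density, which follows from absolute convergence (monotone convergence suffices since all terms are nonnegative) given that $|\rho| < 1$.
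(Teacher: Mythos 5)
Your proposal is correct, and it supplies something the paper itself does not: the paper's ``proof'' of this proposition is only a citation to Sec.\ V of Dawood and Narayanan, whereas you give a self-contained derivation. Your route --- writing $D_\mathrm{DN}=|W|/4$ with $W=\sum_n z_1[n]z_2^*[n]$, conditioning on the reference record so that $|W|$ is conditionally Rician with $I_0$-argument and Marcum-$Q$ cross term \emph{independent of} $E=\sum_n|z_2[n]|^2$, and then integrating the gamma density of $E$ via $\int_0^\infty t^{\nu-1}e^{-\alpha t-\beta/t}\,dt=2(\beta/\alpha)^{\nu/2}K_\nu(2\sqrt{\alpha\beta})$ --- checks out in detail: with $\alpha=1/\bigl(2\sigma_2^2(1-\rho^2)\bigr)$ and $\beta=w^2/\bigl(2\sigma_1^2(1-\rho^2)\bigr)$ the Bessel arguments become $2\tilde{x}/(1-\rho^2)$ and $2\rho\tilde{x}/(1-\rho^2)$, the prefactors collapse to $8\tilde{x}^N/\bigl(\sigma_1\sigma_2(1-\rho^2)\Gamma(N)\bigr)$ after the change of variable $x=w/4$, and the term-by-term integration of the series $Q_1(a,b)=e^{-(a^2+b^2)/2}\sum_{m\ge 0}(a/b)^mI_m(ab)$ shifts the gamma exponent to $E^{N+m-1}$ and yields exactly $\rho^m K_{N+m}I_m$. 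Two small points worth making explicit: (i) for the QTMS (reflection-matrix) sign convention one has $N(\bar{R}_c+j\bar{R}_s)=\sum_n z_1[n]z_2[n]$ rather than $\sum_n z_1[n]z_2^*[n]$, but replacing $z_2$ by $z_2^*$ preserves the circular-Gaussian structure and the modulus, so the distribution of $|W|$ is unchanged; (ii) the Marcum-$Q$ series you invoke converges absolutely for all $a,b>0$ (the $m$th term is $O((a^2/2)^m/m!)$), so the interchange of sum and integral is justified by Tonelli as you note. What your approach buys over the paper's bare citation is a transparent explanation of \emph{why} the $I_0$ and $I_m$ factors appear with $E$-free arguments and why the $K$-Bessel orders track the gamma shape parameter $N$; what the citation buys is brevity and consistency with the noise-radar literature's attribution.
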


\begin{proof}
	See Sec.\ V of \cite{dawood2001roc}.
\end{proof}

Like \eqref{eq:PDF_rho} and \eqref{eq:PDF_phi}, these expressions are rather cumbersome to work with. In the spirit of the approximations given in Propositions \ref{prop:approx_rice} and \ref{prop:approx_vonMises}, we now derive an approximate expression for the distribution of $D_\mathrm{DN}$. This time, however, we are able to supply a proof of the proposition.

\begin{proposition} \label{prop:approx_detDN}
	In the limit $N \to \infty$ and to first order in $\rho$, $D_\mathrm{DN}$ follows a Rice distribution with parameters  
	\begin{subequations}
		\begin{align}
			\alpha &= \frac{N}{2}\rho\sigma_1\sigma_2 \\
			\beta &= \sqrt{\frac{N}{8}}\sigma_1\sigma_2.
		\end{align}
	\end{subequations}
\end{proposition}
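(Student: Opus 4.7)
The plan is to apply the central limit theorem to $\bar{R}_c$ and $\bar{R}_s$ and then identify the limiting distribution of their Euclidean norm as Rician. Because $R_c[n]$ and $R_s[n]$ are quartic polynomials in the jointly Gaussian vector $\vec{x}[n]$, they have finite moments of every order, so for large $N$ the joint distribution of $(\bar{R}_c, \bar{R}_s)$ is well approximated by a bivariate normal with mean $(\expval{R_c}, \expval{R_s})$ and covariance $(1/N)\,\mathrm{Cov}((R_c, R_s))$. Once these population moments are in hand, the Rice distribution drops out of the geometry of the norm.

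First I would compute the means using the entries of $\mat{\Sigma}(\sigma_1, \sigma_2, \rho, \phi)$ in \eqref{eq:QTMS_cov}, which yields $\expval{R_c} = 2\rho\sigma_1\sigma_2\cos\phi$ and $\expval{R_s} = 2\rho\sigma_1\sigma_2\sin\phi$ for both the upper- and lower-sign conventions, so in particular $\expval{R_c}^2 + \expval{R_s}^2 = 4\rho^2\sigma_1^2\sigma_2^2$. For the second moments I would invoke Isserlis' theorem to expand each of $\expval{R_c^2}$, $\expval{R_s^2}$, and $\expval{R_c R_s}$ as a sum of products of pairwise covariances. Every pairing that crosses the received--reference block boundary of $\mat{\Sigma}$ carries a factor of $\rho$, so that each variance equals $2\sigma_1^2\sigma_2^2 + O(\rho^2)$ once the squared means are subtracted, and the cross-term $\expval{R_c R_s}$ involves only monomials whose nonzero Isserlis pairings each require two cross-block contractions, giving $\mathrm{Cov}(R_c, R_s) = O(\rho^2)$.

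Keeping only leading-order-in-$\rho$ contributions, $(\bar{R}_c, \bar{R}_s)$ is therefore approximately bivariate normal with mean vector of Euclidean length $2\rho\sigma_1\sigma_2$ and isotropic covariance $(2\sigma_1^2\sigma_2^2/N)\mat{1}_2$. The norm of such a vector is, by definition, Rice distributed with parameters $\alpha_0 = 2\rho\sigma_1\sigma_2$ and $\beta_0 = \sigma_1\sigma_2\sqrt{2/N}$. Since $D_\mathrm{DN} = (N/4)\sqrt{\bar{R}_c^2 + \bar{R}_s^2}$ is a positive rescaling of this norm, and the Rice family is closed under positive scaling with both parameters multiplied by the same factor, applying the factor $N/4$ yields $\alpha = N\rho\sigma_1\sigma_2/2$ and $\beta = (N/4)\sigma_1\sigma_2\sqrt{2/N} = \sigma_1\sigma_2\sqrt{N/8}$, matching the claim.

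The main obstacle will be the bookkeeping in the Isserlis expansion. The cross term $\expval{R_c R_s}$ in particular involves four quartic monomials whose expectations must be tracked separately, and one must verify that any apparent linear-in-$\rho$ contribution vanishes once $\expval{R_c}\expval{R_s}$ is subtracted. The standard-noise-radar (lower-sign) case is handled in parallel with the QTMS case: the sign flips in the relevant off-diagonal entries of $\mat{\Sigma}$ are compensated by the sign flips in \eqref{eq:aux_Rc}--\eqref{eq:aux_Rs}, so every pairwise product that enters the Isserlis expansion is unchanged and the same limiting Rice distribution is obtained.
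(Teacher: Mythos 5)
Your proof is correct and follows essentially the same route as the paper's: apply the central limit theorem to $(\bar{R}_c, \bar{R}_s)$, show the covariance is isotropic to first order in $\rho$, identify the norm of the resulting bivariate normal vector as Rician, and rescale by $N/4$ (your use of Isserlis' theorem is the same computation as the paper's cited fourth-order-moment formula, since the variables are jointly Gaussian). It is worth noting that your intermediate covariance $\tfrac{2\sigma_1^2\sigma_2^2}{N}\mat{1}_2$ is the value consistent with the stated $\beta = \sqrt{N/8}\,\sigma_1\sigma_2$ and with the Rayleigh limit quoted from Dawood and Narayanan, whereas the prefactor $\tfrac{\sigma_1^2\sigma_2^2}{2N}$ printed in the paper's proof would rescale to $\sqrt{N/32}\,\sigma_1\sigma_2$ and appears to be a typo.
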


\begin{proof}
	According to the central limit theorem, the random vector $[\bar{R}_c, \bar{R}_s]^\T$ follows a bivariate normal distribution when $N \to \infty$:
	\begin{equation}
		\begin{bmatrix}
			\bar{R}_c \\
			\bar{R}_s
		\end{bmatrix} \sim 
		\mathcal{N} \mleft( 
		\begin{bmatrix}
			2\rho \sigma_1\sigma_2 \cos \phi \\
			2\rho \sigma_1\sigma_2 \sin \phi
		\end{bmatrix} \! ,
		\frac{\sigma_1^2 \sigma_2^2}{2N} [\mat{1}_2 + \rho^2 \mat{R}'(2\phi)]
		\mright).
	\end{equation}
	The mean vector is obtained by simply reading off and summing the appropriate entries in \eqref{eq:QTMS_cov}. The covariance matrix can be calculated by repeatedly applying \cite[Eq.\ (13)]{bohrnstedt1969cov}, which gives an expression for the expected value of fourth-order terms such as $\expval{I_1 I_2 Q_1 Q_2}$. It is evident that, to first order in $\rho$, the covariance matrix of $[\bar{R}_c, \bar{R}_s]^\T$ is proportional to the identity matrix.
	
	Recall that, for any $\theta$, the Rice distribution arises from the Euclidean norm of a bivariate normal random vector as follows:
	\begin{equation}
		\vec{X} \sim \mathcal{N} \mleft( 
		\begin{bmatrix}
			\alpha \cos \theta \\
			\alpha \sin \theta
		\end{bmatrix} \! , 
	\beta^2\mat{1}_2 \mright) \implies \| \vec{X} \| \sim \mathrm{Rice}(\alpha, \beta).
	\end{equation}
	Therefore, to first order in $\rho$, $\sqrt{\bar{R}_c^2 + \bar{R}_s^2}$ follows a Rice distribution with parameters $\alpha = 2\rho \sigma_1\sigma_2$ and $\beta = \sigma_1\sigma_2/\sqrt{2N}$ when $N \to \infty$. The proposition follows upon rescaling $\sqrt{\bar{R}_c^2 + \bar{R}_s^2}$ by a factor of $N/4$.
\end{proof}

\begin{remark}
	In \cite{dawood2001roc}, Dawood and Narayanan observe that when $\rho = 0$ and $N$ is large, $D_\mathrm{DN}$ is Rayleigh distributed with scale parameter $\sigma = \sigma_1\sigma_2\sqrt{N/8}$. The Rice distribution reduces to the Rayleigh distribution when $\alpha = 0$, so our result is in agreement with Dawood and Narayanan's observation.
\end{remark}

\begin{figure}[t]
	\centerline{\includegraphics[width=\columnwidth]{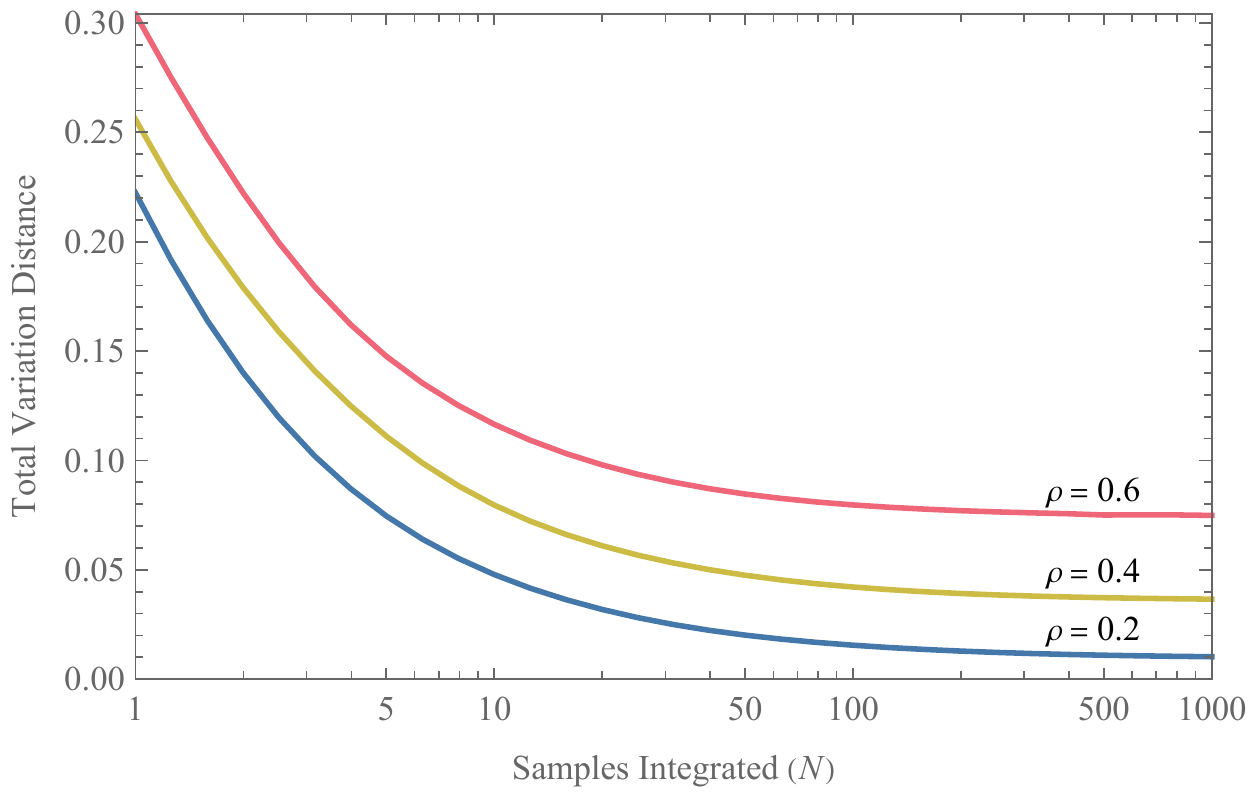}}
	\caption{Total variation distance between the exact probability density function of $D_\mathrm{DN}$ and the approximation described in Proposition \ref{prop:approx_detDN}, plotted as a function of $N$, for $\rho \in \{0.2, 0.4, 0.6\}$.}
	\label{fig:totVarDist_detDN}
\end{figure}

To quantify the goodness of the approximation in Proposition \ref{prop:approx_detDN}, we use the TVD as we did previously. Fig.\ \ref{fig:totVarDist_detDN} shows plots of $\mathit{TVD}_{D_\mathrm{DN}}$ as a function of $N$. We see that as $N$ becomes large, $\mathit{TVD}_{D_\mathrm{DN}}$ decreases to a steady-state value which increases with $\rho$. Hence, as expected, the Rician approximation becomes better when $N$ is large, but is only good when $\rho$ is small. Moreover, the smaller the value of $\rho$, the smaller the $N$ required for the approximation to be a good one.

\begin{figure}[t]
	\centering
	\subfloat[]{\includegraphics[width=\columnwidth]{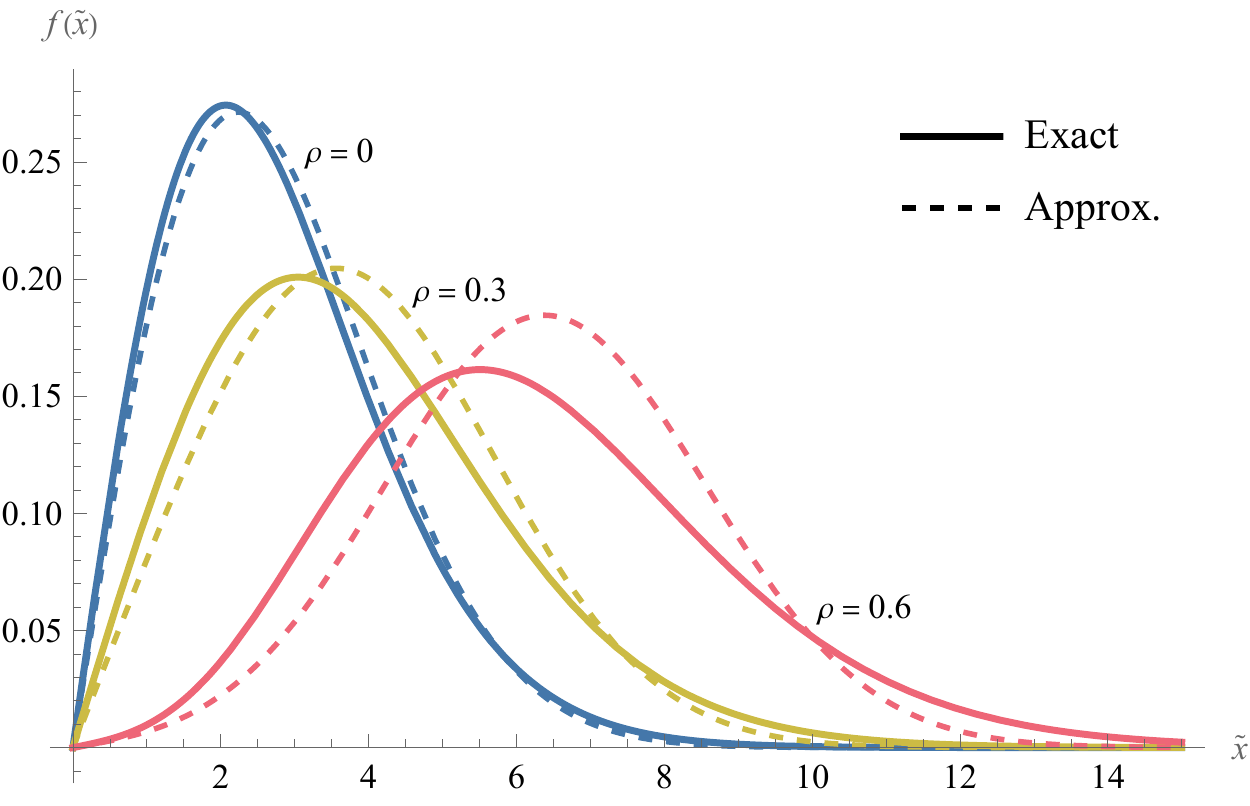}
		\label{subfig:PDF_detDN_rho}}
	\\
	\subfloat[]{\includegraphics[width=\columnwidth]{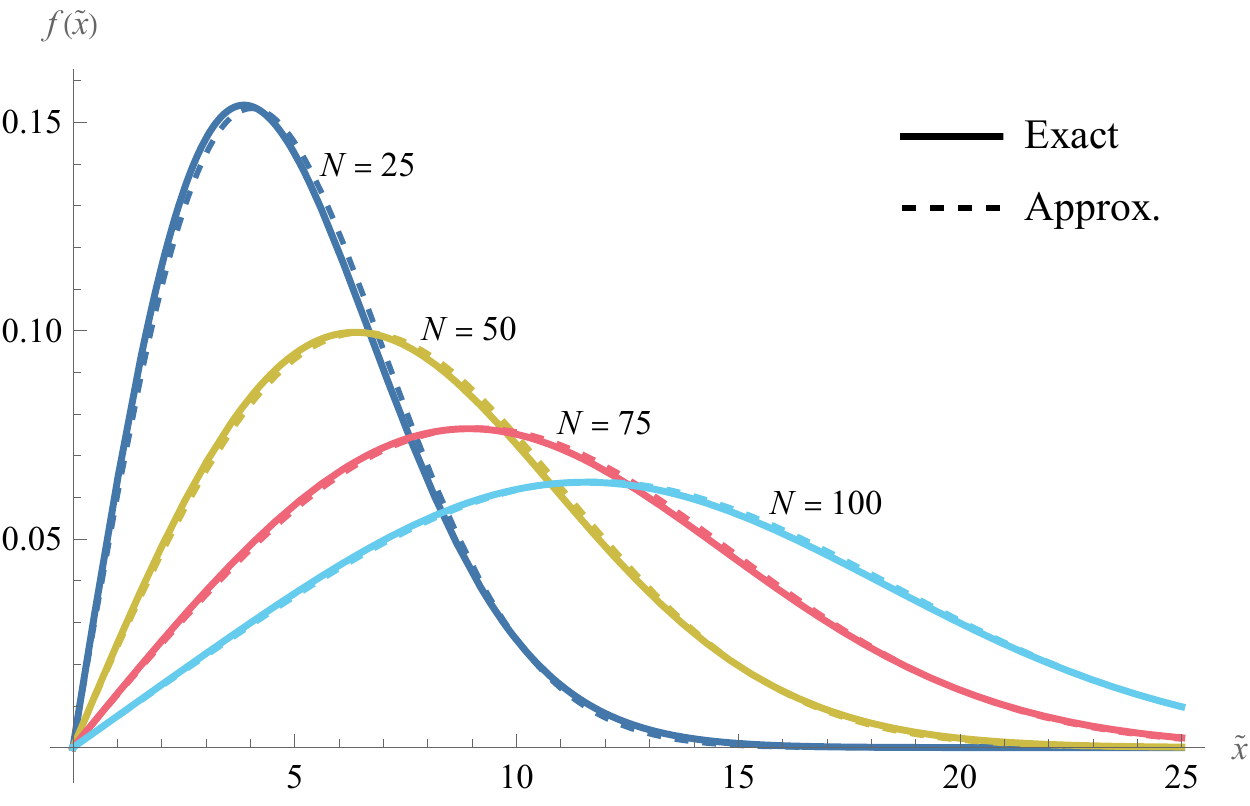}
		\label{subfig:PDF_detDN_N}}
	\caption{Probability density function of $D_\mathrm{DN}$ as a function of the normalized detector output $\tilde{x} \equiv 2x/(\sigma_1\sigma_2)$, together with the Rice distribution approximation described in Proposition \ref{prop:approx_detDN}. In (a), $N = 10$ and $\rho \in \{0, 0.3, 0.6\}$; in (b), $\rho = 0.1$ and $N \in \{25, 50, 75, 100\}$.}
	\label{fig:PDF_detDN}
\end{figure}

In Fig.\ \ref{fig:PDF_detDN}, we plot $f_\text{DN}(x | \sigma_1, \sigma_2, \rho, N)$ as a function of the normalized detector output $\tilde{x} \equiv 2x/(\sigma_1\sigma_2)$. By this normalization, we eliminate the need for separate plots in which $\sigma_1$ and $\sigma_2$ are varied, and we need only consider $\rho$ and $N$. The same figure also shows the corresponding Rice distribution approximations. Note that as $\rho$ increases, the approximation becomes worse and worse; conversely, as $N$ increases, the approximation becomes better and better.

Finally, we use the approximation in Proposition \ref{prop:approx_detDN} to give a closed-form approximation for the ROC curve of the $D_\mathrm{DN}$ detector.

\begin{proposition}
	In the limit $N \to \infty$ and to first order in $\rho$, the ROC curve for the detector $D_\mathrm{DN}$ is
	\begin{equation} \label{eq:ROC_detDN}
		\pd(\pfa | \rho, N) = Q_1 \mleft( \rho\sqrt{2N}, \sqrt{-2 \ln \smash{\pfa}} \mright).
	\end{equation}
\end{proposition}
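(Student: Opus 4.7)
The plan is to read off the ROC curve directly from the Rice approximation in Proposition \ref{prop:approx_detDN}, applying it separately under the two hypotheses in \eqref{eq:hypotheses} to obtain $\pfa$ and $\pd$ as functions of the detection threshold $T$, and then eliminating $T$.

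First, I would handle $\pfa$. Under $H_0$, we set $\rho = 0$ in the Rice approximation, so $\alpha = 0$ and the Rice distribution collapses to a Rayleigh distribution with scale parameter $\beta = \sigma_1\sigma_2\sqrt{N/8}$ (this limiting case was already noted in the remark following Proposition \ref{prop:approx_detDN}, and agrees with Dawood and Narayanan's exact result for $\rho = 0$). Using the Rayleigh CCDF, $\pfa(T) = \exp\bigl(-T^2/(2\beta^2)\bigr)$, which inverts to $T = \beta\sqrt{-2\ln \pfa}$.

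Next I would handle $\pd$. Under $H_1$, I apply the full Rice approximation with $\alpha = (N/2)\rho\sigma_1\sigma_2$ and the same $\beta$. The Rice CCDF gives $\pd(T) = Q_1(\alpha/\beta,\, T/\beta)$. Substituting the $T$ obtained above, the scale factor $\beta$ cancels in the second argument, leaving $T/\beta = \sqrt{-2\ln \pfa}$. For the first argument, a direct calculation yields
\begin{equation*}
	\frac{\alpha}{\beta} = \frac{(N/2)\rho\sigma_1\sigma_2}{\sigma_1\sigma_2\sqrt{N/8}} = \rho\sqrt{2N},
\end{equation*}
so that both $\sigma_1$ and $\sigma_2$ drop out. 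Collecting these two pieces yields \eqref{eq:ROC_detDN}.

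There is no real obstacle here; the key point is that the $\sigma_1\sigma_2$ factors present in $\alpha$ and $\beta$ cancel in the Marcum-$Q$ arguments, which is why the ROC is parameter-free apart from $\rho$ and $N$. The only thing worth remarking on is that the approximation inherits the same regime of validity as Proposition \ref{prop:approx_detDN} (large $N$, small $\rho$); consistency of the $\rho \to 0$ limit with the exact Rayleigh result for $\pfa$ provides an independent sanity check.
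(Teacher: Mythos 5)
Your proposal is correct and follows essentially the same route as the paper: use the Rayleigh limit of the Rice approximation under $H_0$ to get $\pfa(T)$ and invert for the threshold, then apply the Rice CCDF under $H_1$ and observe that the $\sigma_1\sigma_2$ factors cancel in the Marcum-$Q$ arguments. Your computations of $\alpha/\beta = \rho\sqrt{2N}$ and $T/\beta = \sqrt{-2\ln\pfa}$ check out against the paper's intermediate expression $\pfa(T) = \exp\bigl(-4T^2/(N\sigma_1^2\sigma_2^2)\bigr)$.
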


\begin{proof}
	When the radar target is absent ($\rho = 0$), the Rice distribution reduces to the Rayleigh distribution, the CDF of which is well known. Using Proposition \ref{prop:approx_detDN}, it is easy to show that
	\begin{equation}
		\pfa(T) = \exp \mleft( -\frac{4 T^2}{N \sigma_1^2 \sigma_2^2} \mright).
	\end{equation}
	The remainder of the proof is the same as that of Proposition \ref{prop:approx_rice}, except that we use the parameters listed in Proposition \ref{prop:approx_detDN}.
\end{proof}

\begin{figure*}[t]
	\centering
	\subfloat[]{\includegraphics[width=\columnwidth]{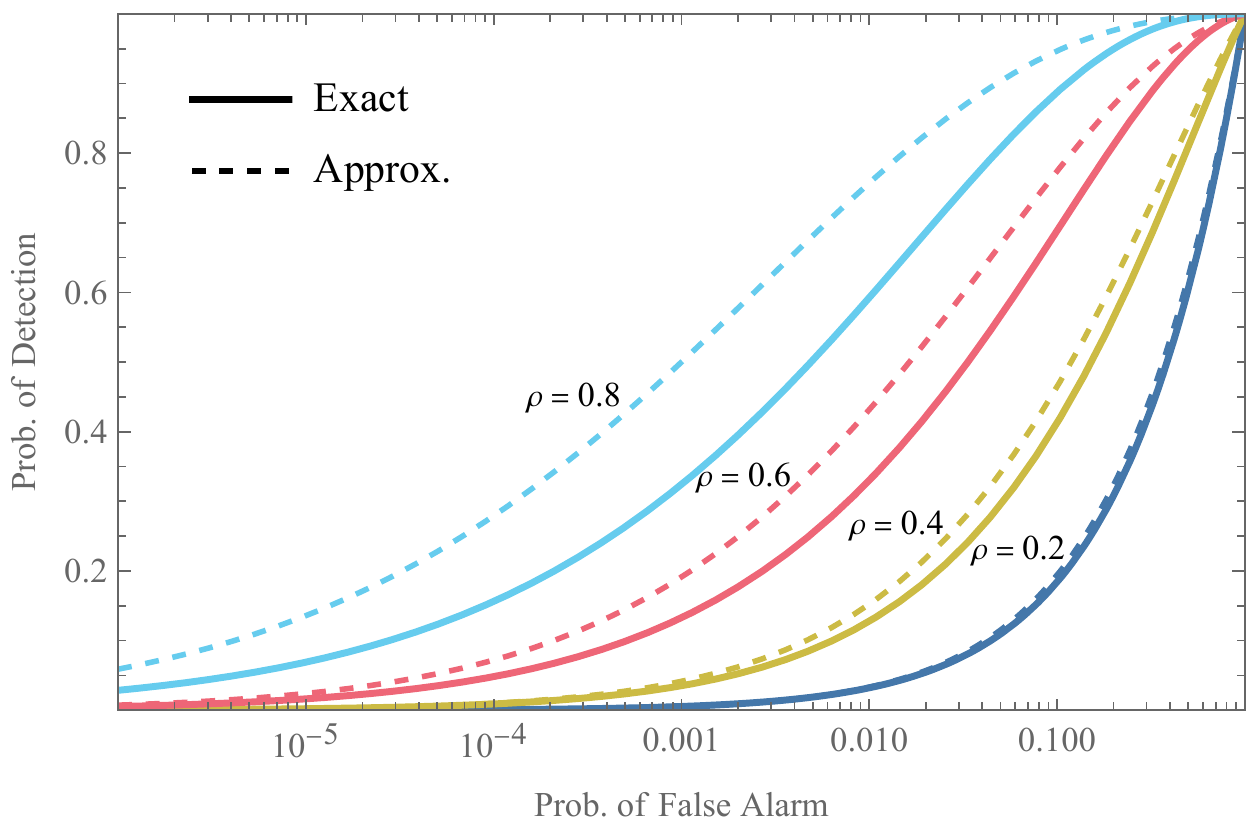}
		\label{subfig:ROC_detDN_rho}}
	\hfil
	\subfloat[]{\includegraphics[width=\columnwidth]{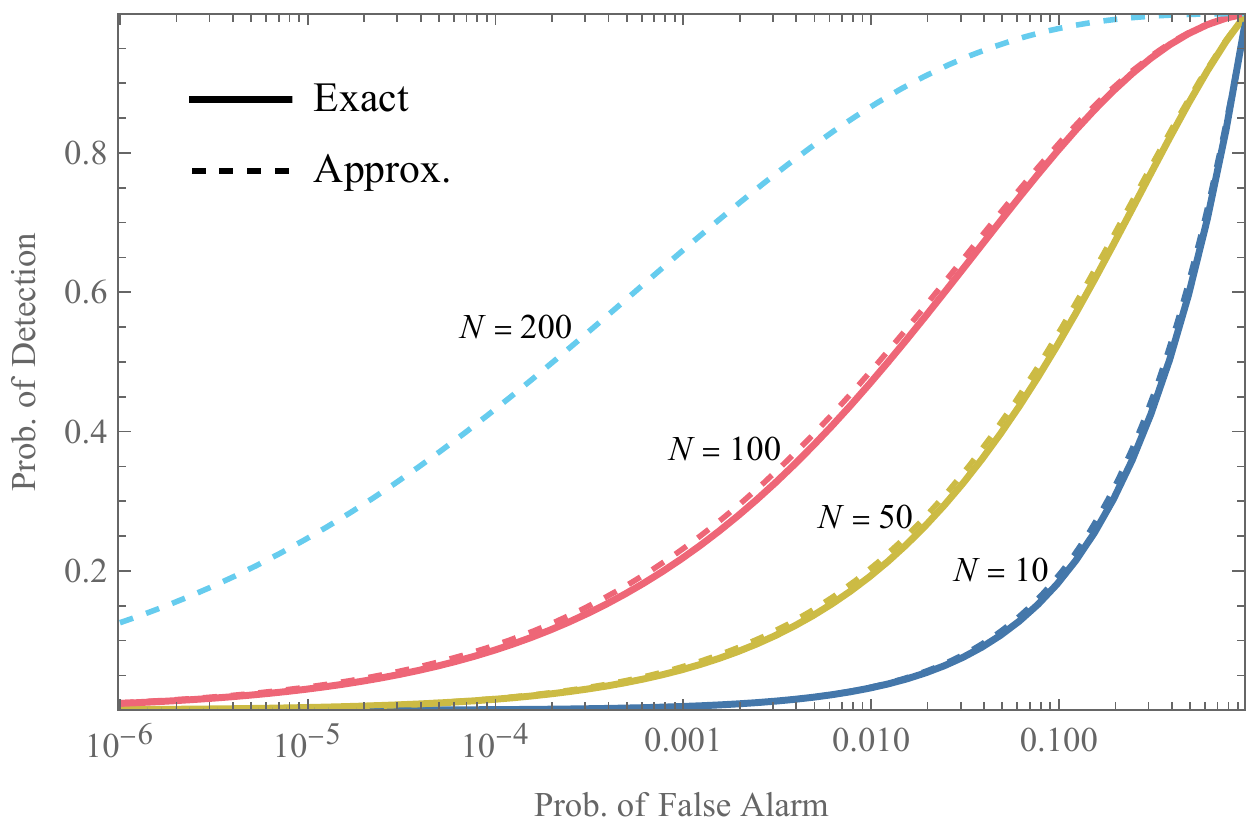}
		\label{subfig:ROC_detDN_N}}
	\caption{ROC curves for $D_\mathrm{DN}$, together with approximations calculated using \eqref{eq:ROC_detDN}. In (a), $N = 10$ and $\rho \in \{0.2, 0.4, 0.6, 0.8\}$; in (b), $\rho = 0.2$ and $N \in \{10, 50, 100, 200\}$. Due to numerical instabilities, the ROC curve for $\rho = 0.2$, $N = 200$ has been omitted, and only the approximation is shown.}
	\label{fig:ROC_detDN}
\end{figure*}

Fig.\ \ref{fig:ROC_detDN} shows ROC curve plots for the $D_\mathrm{DN}$ detector, together with approximations obtained from \eqref{eq:ROC_detDN}. We see that the approximation is good for small values of $\rho$, but \eqref{eq:ROC_detDN} overestimates the performance of the detector when $\rho$ is large. Incidentally, Fig.\ \ref{subfig:ROC_detDN_N} shows the value of the approximations derived in this paper: numerical instabilities prevented us from plotting the ROC curve for $\rho = 0.2$, $N = 200$, and we were only able to plot the approximate curve.

\subsection{Comparison of ROC Curves for $\hat{\rho}$ and $D_\mathrm{DN}$}

It should come as no surprise that the ROC curves for $\hat{\rho}$ and $D_\mathrm{DN}$ are the same when $N \to \infty$ and $\rho \ll 1$. To see this, consider the ROC curve for $\hat{\rho}$ in the form \eqref{eq:ROC_rho_old}; this is a good approximation to \eqref{eq:ROC_rho} when $N$ is large. When $\rho \ll 1$, the $\rho^2$ terms in \eqref{eq:ROC_rho_old} may be ignored; the result is exactly \eqref{eq:ROC_detDN}. Hence, under the stated conditions, the two detectors are essentially equivalent. 

We should note that the conditions $N \to \infty$ and $\rho \ll 1$ have more than a purely mathematical significance. In fact, the correlation coefficient $\rho$ is a decreasing function of range \cite{luong2022performance}; it also depends on factors such as the radar cross section of the target. Thus, the small-$\rho$ limit corresponds to the case where the target of the radar is small or far away. Under such conditions, the easiest way to compensate is by increasing the integration time---in other words, increasing $N$. (One could also compensate by increasing the transmit power; this would increase $\rho$ instead.) In summary, $\hat{\rho}$ and $D_\mathrm{DN}$ perform similarly when the target of the radar is small, far away, or otherwise difficult to detect. In this case, it may be preferable to use $D_\mathrm{DN}$, if only because \cite{dawood2001roc} includes an explicit block diagram showing how to build the detector using analog components such as mixers.

\begin{figure}[t]
	\centerline{\includegraphics[width=\columnwidth]{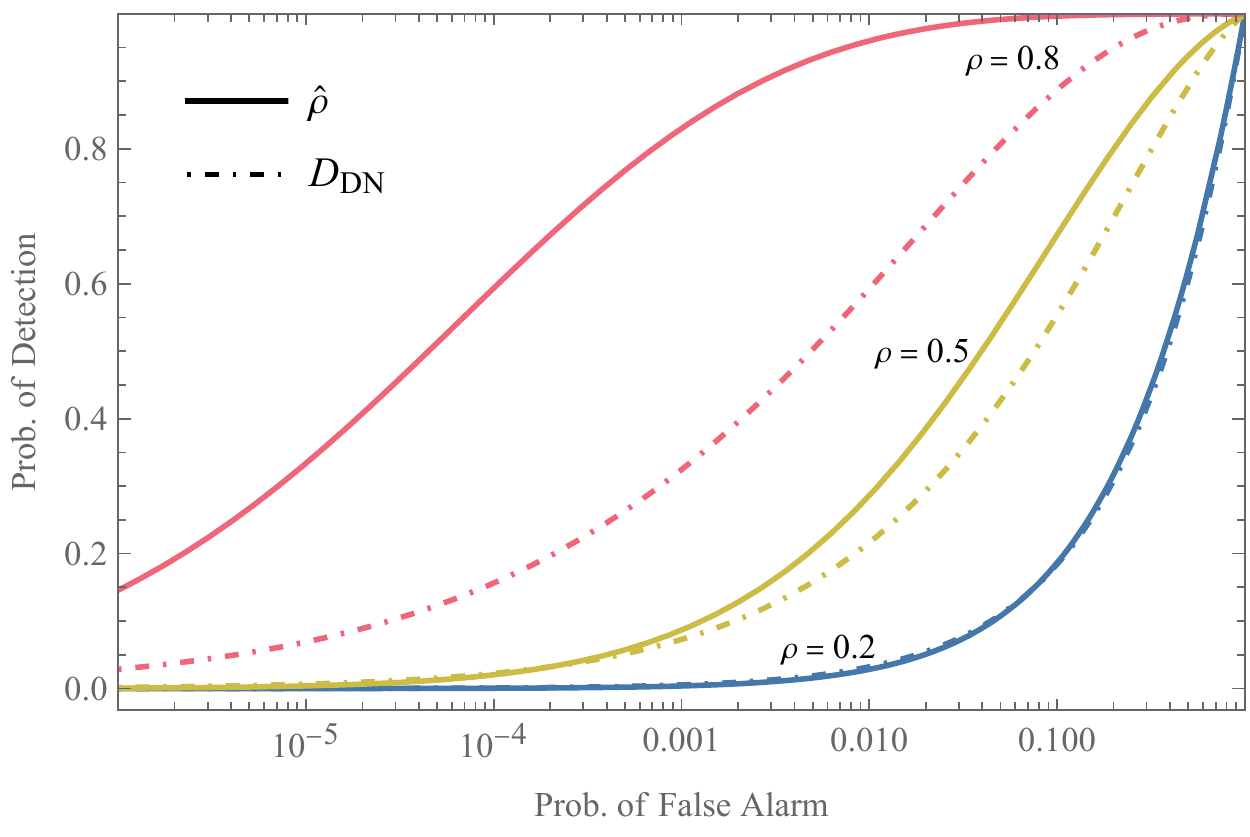}}
	\caption{Comparison of ROC curves for $\hat{\rho}$ and $D_\mathrm{DN}$ when $N = 10$ and $\rho \in \{0.2, 0.5, 0.8\}$.}
	\label{fig:ROC_comparison}
\end{figure}

At the opposite extreme, however, it turns out that the two detectors can behave quite differently. When $\rho$ is large and $N$ is small, it is possible for $\hat{\rho}$ to outperform $D_\mathrm{DN}$. In Fig.\ \ref{fig:ROC_comparison}, we plot (exact) ROC curves for the two detectors for $N = 10$. When $\rho = 0.2$, the two detectors remain indistinguishable, but as $\rho$ increases, $\hat{\rho}$ achieves a far higher $\pd$ for a given $\pfa$. Therefore, when it is desired to detect a nearby target quickly, it is advantageous to use $\hat{\rho}$.

\section{Conclusion}
\label{sec:conclusion}

This paper focused on deriving estimators for the four parameters that appear in the noise/QTMS radar covariance matrix \eqref{eq:QTMS_cov}, and elucidating certain statistical properties of these estimators. Our results may be summarized as follows: we derived estimators for the parameters, we characterized the probability distributions of the estimators, and we applied the results to the problem of target detection.

In Sec.\ \ref{sec:estimating}, we considered two methods for obtaining estimates of the parameters $\sigma_1$, $\sigma_2$, $\rho$, and $\phi$. One of them was based on minimizing the Frobenius norm between the sample covariance matrix (calculated directly from radar measurement data) and the structured matrix \eqref{eq:QTMS_cov}. The other was maximum likelihood estimation. Remarkably, both methods give the same estimates.

In Sec.\ \ref{sec:pdfs}, we gave expressions for the probability density functions for each of the four estimators. Another remarkable coincidence manifested here: for $\hat{\rho}$ and $\hat{\phi}$, we were able to reuse results from the theory of two-channel SAR, saving us the trouble of deriving the PDFs from scratch. Unfortunately, these PDFs were very complicated, involving the use of hypergeometric functions. However, we empirically found that these distributions could be approximated by much simpler distributions, namely the Rice distribution (for $\hat{\rho}$) and the von Mises distribution (for $\hat{\phi}$).

Finally, in Sec.\ \ref{sec:target_detection}, we applied our results to the noise radar target detection problem. We found that the generalized likelihood ratio test was equivalent to using $\hat{\phi}$ as a detector; we also showed connections between the minimum Frobenius norm method for parameter estimation and the detector $D_\mathrm{DN}$ previously studied by Dawood and Narayanan in \cite{dawood2001roc}. Using the approximations from the previous section, we found closed-form equations for the ROC curves of $\hat{\rho}$ and $D_\mathrm{DN}$.

In summary, this paper represents a broad overview of the basic statistical behavior of noise-type radars. We hope, in particular, that the various approximations will be found enlightening. The idea that $\hat{\rho}$ roughly follows a Rice distribution, for example, tells us more about $\hat{\rho}$ than the bare fact that it follows the exact PDF \eqref{eq:PDF_rho}. And from a more practical perspective, the estimators \eqref{eq:est_sigma1}--\eqref{eq:est_phi} are not computationally onerous, and should not be too difficult to incorporate into radar systems.

The results in this paper suggest several avenues for future research. For example, we assumed that all external noise was additive white Gaussian noise. It is necessary to test, using an experimental noise radar (or even a QTMS radar), how well that assumption holds up in practice. Another subject for future research is the properties of other parameters that could be estimated from radar data, such as bearing or range. Range, in particular, is related to phase, an estimator for which is given in \eqref{eq:est_phi}. The peculiar square-root/linear behavior of this estimator, as seen in Fig.\ \ref{fig:kappa_approx}, suggests that the statistical properties of any estimator of the radar range should be carefully studied. Finally, we were able to reuse several results from the theory of two-channel SAR in this paper. It would be fascinating if we could unearth a deeper mathematical connection between noise radars and SAR in future work.

\appendices

\section{Derivation of the Minimum Frobenius Norm Estimators}
\label{app:mfn}

For convenience, instead of performing the minimization \eqref{eq:minimization} directly, we will minimize the square of the norm. The squared Frobenius distance between the theoretical QTMS covariance matrix $\mat{\Sigma}(\sigma_1, \sigma_2, \rho, \phi)$ and the sample covariance matrix $\hat{\mat{S}}$ is
\begin{gather}
	\begin{aligned}
		&g_1(\sigma_1, \sigma_2, \rho, \phi) \equiv \mleft\| \mat{\Sigma}(\sigma_1, \sigma_2, \rho, \phi) - \hat{\mat{S}} \mright\|_F^2 \\
		&\qquad = 2 (\sigma_1^4 + 2\rho^2 \sigma_1^2 \sigma_2^2 + \sigma_2^4) - 2 (\bar{P}_1 \sigma_1^2 + \bar{P}_2 \sigma_2^2) \\
		&\qquad\phantom{=}\qquad - 4 \rho \sigma_1 \sigma_2 (\bar{R}_c \cos \phi + \bar{R}_s \sin \phi) + \| \hat{\mat{S}} \|_F^2.
	\end{aligned}
\end{gather}
The estimators are obtained by minimizing $g_1(\sigma_1, \sigma_2, \rho, \phi)$ subject to the conditions $0 \leq \sigma_1$, $0 \leq \sigma_2$, and $0 \leq \rho \leq 1$. Note that $\lVert \hat{\mat{S}} \rVert_F^2$ is a constant that does not depend on any of the four parameters.

The minimum of $g_1(\sigma_1, \sigma_2, \rho, \phi)$ must lie either at a stationary point or on the boundary of the parameter space over which we maximize. It turns out that the minimum does not occur on the boundary, but we will leave an analysis of the boundary for later and focus on the stationary points for now. The stationary points of $g_1(\sigma_1, \sigma_2, \rho, \phi)$ can be obtained by setting $\nabla g_1(\sigma_1, \sigma_2, \rho, \phi) = 0$ and solving for the parameters $\sigma_1$, $\sigma_2$, $\rho$, and $\phi$. The four elements of $\nabla g_1(\sigma_1, \sigma_2, \rho, \phi)$ are
\begin{subequations}
	\begin{align}
		\label{eq:g1_dsigma1}
		\frac{\partial g_1}{\partial \sigma_1} &= 4 \sigma_1 (2\sigma_1^2 + 2 \rho^2 \sigma_2^2 - \bar{P}_1) - 4 \rho \sigma_2 (\bar{R}_c \cos \phi + \bar{R}_s \sin \phi) \\
		\label{eq:g1_dsigma2}
		\frac{\partial g_1}{\partial \sigma_2} &= 4 \sigma_2 (2\sigma_2^2 + 2 \rho^2 \sigma_1^2 - \bar{P}_2) - 4 \rho \sigma_1 (\bar{R}_c \cos \phi + \bar{R}_s \sin \phi) \\
		\label{eq:g1_drho}
		\frac{\partial g_1}{\partial \rho} &=  \rho \sigma_1^2 \sigma_2^2 - 4 \sigma_1 \sigma_2 (\bar{R}_c \cos \phi + \bar{R}_s \sin \phi) \\
		\label{eq:g1_dphi}
		\frac{\partial g_1}{\partial \phi} &= 4 \rho \sigma_1 \sigma_2 (\bar{R}_c \sin \phi - \bar{R}_s \cos \phi).
	\end{align}
\end{subequations}
Solving $\partial g_1/\partial \phi = 0$ immediately yields the MFN estimator for $\phi$:
\begin{equation} \label{eq:g1_sol_phi}
	\hat{\phi} = \atantwo(\bar{R}_s, \bar{R}_c).
\end{equation}
Substituting this into \eqref{eq:g1_drho} and rearranging the equation $\partial g_1/\partial \rho = 0$ gives
\begin{equation} \label{eq:g1_rho_prelim}
	\rho = \frac{\sqrt{\bar{R}_c^2 + \bar{R}_s^2}}{2 \sigma_1 \sigma_2}
\end{equation}
Substituting \eqref{eq:g1_sol_phi} and \eqref{eq:g1_rho_prelim} into \eqref{eq:g1_dsigma1} yields
\begin{equation}
	0 = 8 \sigma_1^3 - 4 \bar{P}_1 \sigma_1,
\end{equation}
which yields the MFN estimator for $\sigma_1$:
\begin{equation}
	\hat{\sigma}_1 = \sqrt{ \frac{\bar{P}_1}{2} }.
\end{equation}
The MFN estimator for $\sigma_2$ can be obtained from \eqref{eq:g1_dsigma2} in exactly the same manner:
\begin{equation}
	\hat{\sigma}_2 = \sqrt{ \frac{\bar{P}_2}{2} }.
\end{equation}
Finally, substituting $\hat{\sigma}_1$ and $\hat{\sigma}_2$ into \eqref{eq:g1_rho_prelim} yields
\begin{equation}
	\hat{\rho} = \sqrt{ \frac{\bar{R}_c^2 + \bar{R}_s^2}{\bar{P}_1 \bar{P}_2} }.
\end{equation}

To complete the proof, we will now show that $g_1$ is not minimized on the boundaries of our optimization problem. First, note that
\begin{equation} \label{eq:g1_min}
	g_1(\hat{\sigma}_1, \hat{\sigma}_2, \hat{\rho}, \hat{\phi}) = \lVert \hat{\mat{S}} \rVert_F^2 - \frac{\bar{P}_1^2 + \bar{P}_2^2}{2} - \bar{R}_c^2 - \bar{R}_s^2.
\end{equation}
It is easy to show that in the case where $\sigma_1 = 0$,
\begin{equation}
	\min_{\sigma_2, \rho, \phi} g_1(0, \sigma_2, \rho, \phi) = \lVert \hat{\mat{S}} \rVert_F^2 - \frac{\bar{P}_2^2}{2}.
\end{equation}
This is manifestly greater than \eqref{eq:g1_min}, so the minimum does not occur when $\sigma_1 = 0$. A similar result occurs when $\sigma_2 = 0$. Likewise, when $\rho = 0$,
\begin{equation} \label{eq:g1_rho_0}
	\min_{\sigma_1, \sigma_2, \phi} g_1(\sigma_1, \sigma_2, 0, \phi) = \lVert \hat{\mat{S}} \rVert_F^2 - \frac{\bar{P}_1^2 + \bar{P}_2^2}{2}
\end{equation}
which again is greater than \eqref{eq:g1_min}, so the minimum does not occur when $\rho = 0$, either. The final case is $\rho = 1$, which in fact is a very complicated case requiring the use of a computer algebra system. Although we omit the relevant expressions here, we have verified that the minimum does not occur at $\rho = 1$. We may conclude, therefore, that the MFN estimators are indeed as given above.

\section{Derivation of the Maximum Likelihood Estimators}
\label{app:ml}

As shown in \cite{burg1982estimation}, maximizing the likelihood function is equivalent to maximizing the function
\vspace{-\jot}
\begin{multline}
		g_2(\sigma_1, \sigma_2, \rho, \phi) = \\
		-\ln |\mat{\Sigma}(\sigma_1, \sigma_2, \rho, \phi)| - \operatorname{tr} \mleft[ \mat{\Sigma}(\sigma_1, \sigma_2, \rho, \phi)^{-1} \hat{\mat{S}} \mright]
\end{multline}
As above, we impose the conditions $0 \leq \sigma_1$, $0 \leq \sigma_2$, and $0 \leq \rho \leq 1$. By a straightforward but tedious calculation, we find that
\vspace{-\jot}
\begin{multline}
	g_2(\sigma_1, \sigma_2, \rho, \phi) = - 2 \ln \mleft[ \sigma_1^2 \sigma_2^2 (1 - \rho^2) \mright] \\
		- \frac{1}{1 - \rho^2} \mleft( \frac{\bar{P}_1}{\sigma_1^2} + \frac{\bar{P}_2}{\sigma_2^2} - \frac{2 \rho (\bar{R}_c \cos \phi + \bar{R}_s \sin \phi)}{\sigma_1 \sigma_2} \mright)
\end{multline}
The maximum of $g_2(\sigma_1, \sigma_2, \rho, \phi)$ must lie either at a stationary point or on the boundary of the parameter space over which we maximize. Some parts of the boundary are easily taken care of: when $\sigma_1 = 0$, $\sigma_2 = 0$, or $\rho = 1$, $g_2(\sigma_1, \sigma_2, \rho, \phi)$ is undefined, so no maximum can occur at those points. This leaves only $\rho = 0$. For now, we will assume $\rho \neq 0$ and return to this case later.

The stationary points of $g_2(\sigma_1, \sigma_2, \rho, \phi)$ can be obtained by setting $\nabla g_2(\sigma_1, \sigma_2, \rho, \phi) = 0$ and solving for the four parameters. The elements of $\nabla g_2(\sigma_1, \sigma_2, \rho, \phi)$ are
\begin{subequations}
	\begin{align}
		\label{eq:g2_dsigma1}
		\frac{\partial g_2}{\partial \sigma_1} &= -\frac{4}{\sigma_1} + \frac{2}{1 - \rho^2} \mleft( \frac{\bar{P}_1}{\sigma_1^3} - \frac{\rho (\bar{R}_c \cos \phi + \bar{R}_s \sin \phi)}{\sigma_1^2 \sigma_2} \mright) \\
		\label{eq:g2_dsigma2}
		\frac{\partial g_2}{\partial \sigma_2} &= -\frac{4}{\sigma_2} + \frac{2}{1 - \rho^2} \mleft( \frac{\bar{P}_2}{\sigma_2^3} - \frac{\rho (\bar{R}_c \cos \phi + \bar{R}_s \sin \phi)}{\sigma_1 \sigma_2^2} \mright) \\
		\label{eq:g2_drho}
		\frac{\partial g_2}{\partial \rho} &= \frac{4\rho}{1 - \rho^2} + \frac{2(\bar{R}_c \cos \phi + \bar{R}_s \sin \phi)}{\sigma_1 \sigma_2 (1 - \rho^2)} \nonumber \\
			 &\phantom{=}\ - \frac{2 \rho}{(1 - \rho^2)^2} \mleft( \frac{\bar{P}_1}{\sigma_1^2} + \frac{\bar{P}_2}{\sigma_2^2} - \frac{2 \rho (\bar{R}_c \cos \phi + \bar{R}_s \sin \phi)}{\sigma_1 \sigma_2} \mright) \\
		\label{eq:g2_dphi}
		\frac{\partial g_2}{\partial \phi} &=	\frac{2 \rho (\bar{R}_s \cos \phi - \bar{R}_c \sin \phi)}{\sigma_1 \sigma_2 (1 - \rho^2)}.
	\end{align}
\end{subequations}
To begin, note that $\partial g_2/\partial \phi = 0$ can be solved immediately to yield the ML estimator for $\phi$:
\begin{equation} \label{eq:g2_sol_phi}
	\hat{\phi} = \atantwo(\bar{R}_s, \bar{R}_c).
\end{equation}
Next, we combine \eqref{eq:g2_dsigma1} and \eqref{eq:g2_dsigma2} as follows:
\begin{align} \label{eq:P_1_P_2}
	0 &= \sigma_1 \frac{\partial g_2}{\partial \sigma_1} - \sigma_2 \frac{\partial g_2}{\partial \sigma_2} \nonumber \\
		&= \frac{2}{1 - \rho^2} \mleft( \frac{\bar{P}_1}{\sigma_1^2} - \frac{\bar{P}_2}{\sigma_2^2} \mright)
\end{align}
It follows that $\sigma_2 = \sigma_1 \sqrt{\bar{P}_2/\bar{P}_1}$. Substituting this and \eqref{eq:g2_sol_phi} into \eqref{eq:g2_drho}, we find that, up to an unimportant prefactor,
\begin{equation}
	0 = \sqrt{ \frac{\bar{R}_c^2 + \bar{R}_s^2}{\bar{P}_1 \bar{P}_2} } (1 + \rho^2) + \frac{2 \sigma_1^2 \rho (1 - \rho^2)}{\bar{P}_1} - 2 \rho.
\end{equation}
Rearranging, we obtain
\begin{equation}
	\sigma_1^2 = \bar{P}_1 \mleft[ \frac{1}{1 - \rho^2} - \sqrt{ \frac{\bar{R}_c^2 + \bar{R}_s^2}{\bar{P}_1 \bar{P}_2} } \frac{1 + \rho^2}{2 \rho (1 - \rho^2)} \mright].
\end{equation}
Substituting this, \eqref{eq:g2_sol_phi}, and \eqref{eq:P_1_P_2} into \eqref{eq:g2_dsigma1} yields, after much simplification,
\begin{equation}
	0 = \bar{R}_c^2 + \bar{R}_s^2 - \rho \sqrt{ \bar{P}_1 \bar{P}_2 (\bar{R}_c^2 + \bar{R}_s^2) }.
\end{equation}
From this, we obtain the ML estimator for $\rho$:
\begin{equation} \label{eq:sol_rho}
	\hat{\rho} = \sqrt{ \frac{\bar{R}_c^2 + \bar{R}_s^2}{\bar{P}_1 \bar{P}_2} }.
\end{equation}
Once we substitute \eqref{eq:g2_sol_phi}, \eqref{eq:P_1_P_2}, and \eqref{eq:sol_rho} into \eqref{eq:g2_dsigma2}, we find that
\begin{equation}
	0 = \bar{P}_2 - 2\sigma_2^2.
\end{equation}
The ML estimators for $\sigma_1$ and $\sigma_2$ follow immediately:
\begin{align}
	\hat{\sigma}_1 &= \sqrt{ \frac{\bar{P}_1}{2} } \\
	\hat{\sigma}_2 &= \sqrt{ \frac{\bar{P}_2}{2} }.
\end{align}

We now return to the possibility that the maximum of $g_2(\sigma_1, \sigma_2, \rho, \phi)$ may occur at the boundary where $\rho = 0$. It turns out that in this case, the estimators $\hat{\sigma}_1$ and $\hat{\sigma}_2$ remain the same; this is easily verified by substituting $\rho = 0$ into \eqref{eq:g2_dsigma1} and \eqref{eq:g2_dsigma2} and solving. As for $\hat{\phi}$, it loses all meaning because $\phi$ does not enter into the likelihood function when $\rho = 0$. But which estimator, $\hat{\rho} = 0$ or \eqref{eq:sol_rho}, actually maximizes $g_2(\sigma_1, \sigma_2, \rho, \phi)$? This is exactly the question that the likelihood ratio detector \eqref{eq:D_GLR} is designed to answer. Therefore, the appropriate estimator for $\rho$ depends on whether the target is predicted to be present or absent: if present, use \eqref{eq:sol_rho}; if absent, $\hat{\rho} = 0$.

\section*{Acknowledgment}

This work was supported by the Natural Science and Engineering Research Council of Canada (NSERC). D.\ Luong also acknowledges the support of a Vanier Canada Graduate Scholarship.

\bibliographystyle{IEEEtran}
\bibliography{qradar_refs,own_refs}

\end{document}